\renewcommand\bibsection%
\journalname{Distributed Computing}
\renewenvironment{proof}[1][Proof.]{\begin{trivlist}
\item[\hskip \labelsep {\emph{#1}}]}{\qed\end{trivlist}}
\newcommand{\half}{\frac{1}{2}}
\newcommand{\floor}[1]{\left\lfloor #1 \right\rfloor}
\newcommand{\set}[1]{\left\{#1\right\}}
\newcommand{\bigO}{\mathcal{O}}
\newcommand{\paren}[1]{\left( #1 \right)}
\newcommand{\PR}[2][]{\Pr_{#1}\left[#2\right]}
\newcommand{\st}{\mbox{ s.t. }}
\def\beep{{beep} }
\begin{document}
\title{Beeping a Maximal Independent Set}

\author{Yehuda Afek \and Noga Alon \and Ziv
    Bar-Joseph \and\\ Alejandro Cornejo \and Bernhard
    Haeupler \and Fabian Kuhn\\
}
\authorrunning{Afek, Alon, Bar-Joseph, Cornejo, Haeupler, Kuhn} 
\institute{%
Y. Afek \at The Blavatnik School of Computer Science,\\
Tel Aviv University, 69978, Israel
\and
N. Alon \at Sackler School of Mathematics,\\
Tel Aviv University, 69978, Israel
\and
Z. Bar-Joseph \at School of Computer Science,\\
Carnegie Mellon Univ., Pittsburgh, PA 15213, USA
\and
A. Cornejo \and B. Haeupler \at Computer Science and Artificial Intelligence 
Laboratory,\\
Massachusetts Institute of Technology, MA 02139, USA
\and
F. Kuhn \at Department of Computer Science,\\
University of Freiburg, 79110 Freiburg, Germany
}
\date{November 18, 2011}

  \maketitle 

  \begin{abstract}


We consider the problem of computing a maximal independent set (MIS) in an
extremely harsh broadcast model that relies only on carrier sensing.  The model
consists of an anonymous broadcast network in which nodes have no knowledge
about the topology of the network or even an upper bound on its size.
Furthermore, it is assumed that an adversary chooses at which time slot 
each node wakes up.  At each time
slot a node can either beep, that is, emit a signal, or be silent. At a 
particular
time slot, beeping nodes receive no feedback, while silent nodes can only
differentiate between none of its neighbors beeping, or at least one of 
its neighbors beeping.

We start by proving a lower bound that shows that in this model, it is not
possible to locally converge to an MIS in sub-polynomial time.
We then study four different relaxations of the model which allow us
to circumvent the lower bound and find an MIS in polylogarithmic
time. First, we show that if a polynomial upper bound on the network
size is known, it is possible to find an MIS in $\bigO(\log^3 n)$
time. Second, if we assume sleeping nodes are awoken by neighboring 
beeps, then
we can also find an MIS in $\bigO(\log^3 n)$ time. Third, if in
addition to this wakeup assumption we allow sender-side collision detection,
that is, beeping nodes can distinguish whether at least one neighboring node
is beeping concurrently or not, we can find an MIS in $\bigO(\log^2 n)$ time.
Finally, if instead we endow nodes
with synchronous clocks, it is also possible to find an MIS in
$\bigO(\log^2 n)$ time.



  \keywords{Maximal Independent Set\and Distributed\and Beeps\and Radio 
  Networks\and Asynchronous Wakeup}
  \end{abstract}

  \section{Introduction}

An MIS is a maximal set of nodes in network such that no two
nodes in the set are neighbors. Since the set is maximal every node in 
the
network is either in the MIS or has a neighbor in the MIS.  The
problem of distributively finding an MIS has been extensively
studied in various models
\cite{alon86,collier1996pat,Peleg,KMNW05,KMW06,KMW04,luby86,podc05,dist11,Z10} 
and has many
applications in networking, and in particular in radio sensor
networks.  Some of the practical applications include the
construction of a backbone for wireless networks, as a foundation for
routing and for clustering, and for generating spanning trees
to reduce communication costs \cite{Peleg,Z10}.

This paper studies the problem of finding an MIS in the discrete
beeping wireless network model introduced in~\cite{beepcolor}. The 
network is
modeled as an undirected graph and time progresses in discrete and synchronous
time slots.  In each time slot a node can either transmit a
``jamming'' signal (called a beep) or detect whether at least one of its
neighbors beeps.  We believe that such a model is minimalistic enough
to be implementable in many real world scenarios.  For example, it can easily
be implemented using carrier sensing alone, where nodes only differentiate
between silence and the presence of a signal on the wireless channel.  Further,
it has been shown that such a minimal communication model is strong enough to
efficiently solve non-trivial tasks
\cite{science11,beepcolor,infocom09,schneider10disc}.  The model is interesting
from a practical point of view since carrier sensing typically uses less energy
to communicate and reaches larger distances when compared with sending regular
messages.


While this model is clearly useful for computer networks, it is also 
useful to model biological processes.
In
biological systems, cells communicate by secreting certain proteins
that are sensed (``heard") by neighboring cells
\cite{collier1996pat}. This is similar to a node in a radio network
transmitting a carrier signal which is sensed (``heard") by its
neighbors. Such physical message passing allows for an upper bound
on message delay. Thus, for a computational model based on these
biological systems, we can assume a set of synchronous and anonymous
processes communicating using beeps~\cite{beepcolor} in an
arbitrary topology. We have recently shown that a variant of MIS is
solved by a biological process, sensory organ precursor (SOP)
selection in flies, and that the fly's solution provides a novel
algorithm for solving MIS \cite{science11}. Here we extend
algorithms for this model in several ways as discussed below.

The paper has two parts. First we prove a lower bound that shows
that in a beeping model with adversarial wake-up it is not possible
to locally converge to an MIS in sub-polynomial time. Next we present 
several relaxations of this model under which
polylogarithmic MIS constructions are possible.

The lower bound shows that if nodes are not endowed with any
information about the underlying communication graph, and their
wake-up time is under the control of the adversary, any (randomized)
distributed algorithm to find an MIS requires at least
$\Omega(\sqrt{n/\log n})$ rounds.  We remark that this lower bound
holds much more generally. We prove the lower bound for the
significantly more powerful radio network model with collision detection 
and arbitrary message sizes. The lower bound is therefore not an 
artifact of the amount of information which can be communicated in the 
beeping model.

Following the lower bound, in the second part of the paper four weaker 
models are considered and a polylogarithmic time algorithm for
an MIS construction is presented for each of these models. First, we 
present an
algorithm that uses a polynomial upper bound on the size of the
network, to compute an MIS in $\bigO(\log^3 n)$ rounds with high
probability. Our next two algorithms assume that nodes are awakened
by incoming beeps (wake-on-beep). First, we present an $\bigO(\log^2 n)$ 
rounds algorithm in the wake-on-beep model with sender collision 
detection. 
Next, we present a $\bigO(\log^3 n)$ time algorithm that works without
sender collision detection in the same wake-on-beep model. Finally, we 
show that even if nodes are only by an adversary (and \emph{not} by 
incoming beeps) it is possible to use synchronous clocks to compute an 
MIS in $\bigO(\log^2 n)$ time without any information about the network.  
The results are summarized in Table~\ref{table:results}.  We highlight 
that all the upper bounds presented in this paper compute a stable MIS 
eventually and almost surely. That is, once an MIS is computed it is 
stable and the probability that no MIS is computed until time $t$ is 
exponentially small in $t$. Thus only the running times of our 
algorithms are randomized.

\begin{table}
\caption{Model restrictions and algorithmic running times}
\label{table:results}
  \centering
\begin{tabular}{c|p{1.4in}|c}
  Section & Assumptions & Running Time \\ \hline
  4 & None (lower bound) & $\Omega(\sqrt{n/\log n})$ \\
  5 & Upper bound on $n$ & $\bigO(\log^3 n)$\\
  6 & Wake-on-Beep + Sender Collision Detection & $\bigO(\log^2 n)$\\
  7 & Wake-on-Beep & $\bigO(\log^3 n)$\\
  8 & Synchronous Clocks & $\bigO(\log^2 n)$\\
\end{tabular}
\end{table}

  \section{Related Work}

The problem of finding an MIS has been recognized and studied as a
fundamental distributed computing problem for a long time (e.g.,
\cite{alon86,awerbuch89,luby86,panconesi95}). Perhaps the single
most influential MIS algorithm is the elegant randomized algorithm
of \cite{alon86,luby86}, generally known as Luby's algorithm, which
has a running time of $\bigO(\log n)$. This algorithm works in a
standard message passing model, where nodes can concurrently and
reliably send and receive messages over all point-to-point links to
their neighbors. \citet{dist11} showed how to improve the bit
complexity of Luby's algorithm to use only $\bigO(\log n)$ bits per
channel ($\bigO(1)$ bits per round).  For the case where the size of the
largest independent set in the $2$-neighborhood of each node is
restricted to be a constant (known as bounded independence or
growth-bounded graphs), \citet{podc08} presented an algorithm that
computes an MIS in $\bigO(\log^* n)$ rounds. This class of graphs
includes unit disk graphs and other geometric graphs that have been
studied in the context of wireless networks.

While several methods were suggested for computing an MIS in a
distributed setting, most previous algorithms are designed for a
classical message passing model without message interference and
collisions and they are based on the assumption that nodes know
something about the local or global topology of the network. The
first effort to design a distributed MIS algorithm for a wireless
communication model in which the number of neighbors is not known is
by \citet{mass04}. They provide an algorithm for the radio network
model with a $\bigO(\log^9 n/\log\log n)$ running time. This was later
improved~\cite{podc05} to $\bigO(\log^2 n)$. Both algorithms assume
that the underlying graph is a unit disk graph (the algorithms also
work for somewhat more general class of geometric graphs). In
addition, while the algorithms solve the MIS problem in multi-hop
networks with adversarial wake up, they assume that an upper bound on
the number of nodes in the network is known. In addition to the upper
bound assumption their model allows for (and their algorithm uses)
messages whose size is a function of the number of nodes in the
network.

The use of carrier sensing and collision detection in wireless
networks has been studied in
\cite{chlebus00,ilcinkas10,schneider10disc}. As shown in
\cite{schneider10disc}, collision detection can be powerful and can
be used to improve the complexity of algorithms for various basic
problems. \citet{mobihoc08} show how to approximate a minimum
dominating set in a physical interference (SINR) model where in
addition to sending messages, nodes can perform carrier sensing. In
\cite{flurywattenhofer}, it is demonstrated how to use carrier
sensing as an elegant and efficient way for coordination in
practice.

The present paper is not the first one that uses carrier sensing alone
for distributed wireless network algorithms. A similar model to the
beeping model considered here was first studied in
\cite{degesys07,infocom09}. As used here, the model has been
introduced in \cite{beepcolor}, where it is shown how to efficiently
obtain a variant of graph coloring that can be used to schedule
non-overlapping message transmissions.  In \cite{science11} a variant
of the beeping model, there called the fly model, was considered. The
fly model makes three additional assumptions: that all the processes
wake up at the same round, that a bound on the
network size is known to the processes, and that senders can detect
collisions. That is, processes can listen on the medium while
broadcasting (as in some radio and local area networks).  Apart from
\cite{science11}, the most closely related work to this paper are
results from \cite{schneider10disc}. In \cite{schneider10disc}, it is
shown that in growth-bounded graphs (a.k.a.\ bounded independence 
graphs) an MIS can be computed in
$\bigO(\log n)$ time using only carrier sensing. Specifically, they 
assume nodes have receiver-side collision detection, they know the 
polynomial growth function of the graph, they known an upper bound on 
the size of the network and they have unique identifiers.
The present paper studies the MIS problem in general graphs under the 
beeping model.

  \section{Model}
\label{sec:model}

Following  \cite{beepcolor}, we consider a synchronous communication network
modeled by an arbitrary graph $G = (V,E)$ where the vertices $V$ 
represent processes and the edges
represent pairs of processes that can hear each other.
We denote the set of neighbors of
node $u$ in $G$ by $N_G(u)=\set{v \mid \set{u,v} \in E}$. For a node
$u \in V$ we use $d_G(u)=|N_G(u)|$ to denote its degree (number of
neighbors) and we use $d_{\max}=\max_{u \in V} d_G(u)$ to denote the
maximum degree of $G$.

Initially all processes are asleep, and a process starts participating in the 
round after it is woken up by an adversary.  We denote by $G_t \subseteq G$
the subgraph induced by the processes which are participating in round $t$.

Instead of communicating by exchanging messages, 
we consider a more primitive communication model that relies
entirely on carrier sensing. Specifically, in every round a
participating process can choose to either beep or listen.
If a process $v$ listens in round $t$
it can only distinguish between silence (i.e., no process $u \in
N_{G_t}(v)$ beeps in round $t$) or the presence of one or more beeps
(i.e., there exists at least one process $u \in N_{G_t}(v)$ that beeps in round
$t$).
Observe that a beep conveys less information than a conventional
1-bit message, for which it is possible to distinguish
between no message, a message with a one, and a message with a zero.

Given an undirected graph $H$, a set of vertices $I \subseteq V(H)$
is an independent set of $H$ if every edge $e \in E(H)$ has at most one
endpoint in $I$. An independent set $I \subseteq V(H)$ is a maximal
independent set of $H$, if for all $v \in V(H)\setminus I$ the set
$I\cup\set{v}$ is not independent.

An event is said to occur with
high probability, if it occurs with probability at least $1-n^{-c}$
for any constant $c \ge 1$, where $n=|V|$ is the number of nodes in
the underlying communication graph. For a positive integer $k \in
\mathbb{N}$ we use $[k]$ as short hand notation for
$\set{1,\ldots,k}$. In a slight abuse of this notation we use $[0]$
to denote the empty set $\varnothing$ and for $a,b \in \mathbb{N}$
and $a < b$ we use $[a,b]$ to denote the set $\set{a,\ldots,b}$.

During the execution of an algorithm each node may go through several different 
states. Of particular interest are the inactive-state and the MIS-state, which 
are present in all the algorithms described in this paper.
A node is defined as \emph{stable} if it is in the MIS-state and all its
neighbors are in the inactive-state, or if it has a stable neighbor in the 
MIS-state.
Observe that by definition, if all nodes are stable then all nodes are either 
in the MIS-state or in the inactive-state. In all our algorithms, 
once a node becomes stable it remains stable thereafter, and moreover 
eventually all nodes become stable with probability one.
We will prove that the algorithms we propose guarantee that with high 
probability nodes becomes stable quickly and the nodes which are in the 
MIS-state describe a maximal independent set.


Specifically we say that a (randomized) distributed algorithm the MIS 
problem in $T$ rounds if, when no additional nodes are woken up for $T$ 
rounds, the nodes which are in the MIS-state
describe a stable MIS (with high probability). Furthermore, we require that eventually the nodes which are
in the MIS-state describe a stable MIS with probability one.
Additionally, we say an algorithm locally converges to an MIS in $T$ 
rounds, if any node (with high probability) irrevocably decides $T$ 
rounds (regardless of wakeups) whether to be in the MIS-state or not.



\section{Lower Bound for Uniform Algorithms}
\label{sec:lowerbound}

In this section we show that without any additional power or a priori
information about the network, e.g., an upper bound on its size or
maximum degree, any randomized distributed algorithm that locally 
converges to an MIS needs at least polynomial time.

We stress that this lower bound is not an artifact of the beeping model,
but a limitation that stems from having message
transmission with collisions and the fact that nodes are required to decide (but not
necessarily terminate) without waiting until all nodes have woken up 
(i.e., locally converge).
Although we prove the lower bound for the problem of finding an
MIS, the bound can be generalized to other problems (e.g., minimal
dominating set, coloring, etc.).

Specifically, we prove the lower bound for the much stronger 
communication
model of local message broadcast with collision detection.
In this model a process can choose in every round either
to listen or to broadcast a message (no restrictions are made on the
size of the message). When listening a process receives silence if
no message is broadcast by its neighbors, it receives a collision
if a message is broadcast by two or more neighbors, and it receives
a message if it is broadcast by exactly one of its neighbors.
The beep communication model can be easily simulated by this model
(instead of beeping send a $1$ bit message, and when listening translate a
collision or the reception of a message to hearing a beep) and hence the
lower bound applies to the beeping model.

At its core, our lower bound argument relies on the observation that a node can 
learn essentially no information about the graph $G$ if after waking up,
it always hears collisions or silence. It thus has to decide whether it
remains silent or beeps within a \emph{constant} number of rounds. More
formally:

\begin{proposition} \label{prop:beep}
  Let $\mathcal{A}$ be an algorithm run by all nodes, and consider a fixed 
  pattern $H \in
  \{\mathrm{silent},\mathrm{collision}\}^*$.
  If after waking up a node $u$ hears $H(r)$ whenever it listens in
  round $r$, then there are two constants $\ell \geq 1$ and $p \in
  (0,1]$ that depend on only $\mathcal{A}$ and $H$ such that either
  \begin{inparaenum}[\bf a)]
  \item $u$ remains listening indefinitely, or 
  \item $u$ listens for $\ell-1$ rounds and broadcasts in round $\ell$
  with probability $p$.
  \end{inparaenum}
\end{proposition}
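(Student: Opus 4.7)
The core observation is that under the hypothesis, node $u$'s execution trace depends only on its own private coin flips and on the fixed string $H$; no information from the rest of the network ever reaches $u$. From this invariance, the constants $\ell$ and $p$ will fall out by taking the first round in which $u$ has a positive chance of broadcasting.

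First I would argue that the only inputs $\mathcal{A}$ consumes at $u$ are the values $u$ observes when it listens, and by assumption each such observation is the predetermined symbol $H(r)$. Consequently the joint distribution over $u$'s action sequence $(X_1, X_2, \ldots) \in \{\mathrm{listen}, \mathrm{broadcast}\}^{\mathbb{N}}$ is a deterministic function of $\mathcal{A}$ and $H$, with all remaining randomness coming from $\mathcal{A}$'s private coins. For every $r \geq 1$ define
\[
q_r \;=\; \Pr\bigl[\, X_1 = \cdots = X_{r-1} = \mathrm{listen},\ X_r = \mathrm{broadcast} \,\bigr],
\]
which by the previous observation is a constant depending only on $\mathcal{A}$ and $H$. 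The underlying events are pairwise disjoint over $r$ and their union is exactly the event that $u$ ever broadcasts, so $\sum_r q_r \leq 1$. If $q_r = 0$ for every $r$, then $u$ broadcasts with probability zero, giving case (a) (and $\ell, p$ may be chosen arbitrarily). Otherwise, set $\ell := \min\{r \geq 1 : q_r > 0\}$ and $p := q_\ell \in (0,1]$; both depend only on $\mathcal{A}$ and $H$, and case (b) holds by construction.

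The only point that requires care is the first step: one must verify that conditioning on any prefix of listen-observations really does leave the marginal distribution of $u$'s next action unchanged, because every such prefix is forced by $H$ to match a particular silence/collision pattern. This is exactly where it is essential that a collision (and silence) conveys no payload beyond its label---in the collision-detection radio model this is built in, and the beeping model is a further special case, so the invariance applies to both. Once that invariance is in hand, the minimality extraction of $\ell$ and $p$ is immediate.
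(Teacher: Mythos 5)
Your proof is correct and follows essentially the same route as the paper's: both rest on the observation that $u$'s actions depend only on $\mathcal{A}$, the fixed pattern $H$, and its private coins, and then take $\ell$ to be the first round with positive broadcast probability and $p$ that probability. Your version merely spells out the first-broadcast decomposition and the conditioning invariance that the paper's two-line argument leaves implicit.
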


\begin{proof}
  We fix a node $u$ and let $p(r)$ be the probability with which node $u$
  beeps in round $r$. Observe that $p(r)$ can only depend on $r$, what
  node $u$ heard up to round $r$, that is, $H[1\ldots r]$ and its random coin 
  flips.
  Therefore, given any algorithm, either $p(r)=0$ for all $r$ (and node
  $u$ remains silent forever), or $p(r) > 0$ for some $r$, in which case
  we let $p=p(r)$ and $\ell=r$.
\end{proof}

We now prove the main result of this section:

\begin{theorem}
  \label{thm:lowerbound}
  If nodes have no a priori information about the graph $G$ then any
  distributed algorithm in the local message broadcast
  model with collision detection that locally converges to an MIS 
  requires with constant probability at least $\Omega(\sqrt{n/\log
  n})$ rounds.
\end{theorem}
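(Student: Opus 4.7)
The plan is to exhibit a graph $G$ on $n$ nodes and a wake-up schedule for which any algorithm $\mathcal{A}$ with local convergence time $T = o(\sqrt{n/\log n})$ fails to output a valid MIS with constant probability. The argument rests on iterating Proposition~\ref{prop:beep}, which tells us that a node whose listening pattern is fixed in $\{\mathrm{silent},\mathrm{collision}\}^*$ behaves deterministically up to its first broadcast, and on engineering persistent symmetry among many independent gadgets.

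First I would iterate Proposition~\ref{prop:beep} on the initial all-silent pattern to extract from $\mathcal{A}$ a canonical schedule: rounds $\ell_1<\ell_2<\cdots$ and probabilities $p_1,p_2,\ldots\in(0,1]$ such that a node whose listening history consists entirely of silence, and which has not yet broadcast, broadcasts for the first time at round $\ell_i$ with probability $p_i$ (conditional on still being silent at that point). Let $I=\{i:\ell_i\leq T\}$, so $|I|\leq T$.

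Next I would construct the instance as a vertex-disjoint union of $m=\Theta(n/k)$ small gadgets of size $k$ whose participants wake simultaneously at time $1$, placed far enough apart in $G$ to be mutually invisible within $T$ rounds. Each gadget is a small subgraph (an edge, or a clique of size $k$) whose only correct MIS outputs break symmetry among its participants. The critical observation is that so long as in every round $\ell_i\in I$ either \emph{all} participants of a gadget broadcast or \emph{all} listen, no participant ever receives any distinguishing information: listeners only ever hear silence or a collision, broadcasters hear nothing, so by Proposition~\ref{prop:beep} the participants remain on the same canonical schedule and are mutually indistinguishable. A gadget that stays synchronized throughout $I$ must then force its $k$ participants to produce a common decision that violates the MIS property inside the gadget.

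The per-round synchrony probability in a gadget is $p_i^k+(1-p_i)^k$, so the per-gadget failure probability is at least $\prod_{i\in I}\bigl(p_i^k+(1-p_i)^k\bigr)$. Because this product grows as $p_i$ moves away from $1/2$, and the $m=\Theta(n/k)$ gadgets have independent coins, a union bound yields at least a constant probability that \emph{some} gadget fails whenever the gadget size $k$ and the adversary's wake-up structure are balanced against the algorithm's broadcast schedule. The main obstacle lies precisely in this balancing: a naive pair gadget ($k=2$) together with the simple bound $p_i^2+(1-p_i)^2\geq 1/2$ only yields $T=\Omega(\log n)$, so reaching $\Omega(\sqrt{n/\log n})$ requires either a larger gadget combined with a sharper tail bound, or a staggered adversary that exploits the birthday-paradox structure of collisions among $\Theta(T)$ successive waves of freshly-woken nodes. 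The crux is to show that no schedule of $p_i$ can simultaneously effect rapid symmetry breaking inside one gadget \emph{and} distinguish all $m$ gadgets from each other within the window $T$.
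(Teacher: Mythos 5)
Your proposal has a genuine gap, and it lies exactly where you flag it. The mechanism you use to preserve indistinguishability---all $k$ nodes of a gadget making the same broadcast/listen choice in every scheduled round---succeeds per round with probability $p_i^k+(1-p_i)^k$, and the product over rounds decays exponentially unless the $p_i$ are pushed toward $0$ or $1$ (in which case symmetry breaking inside a gadget is slow but the gadgets give you nothing stronger either); as you concede, this yields only $T=\Omega(\log n)$, and the ``larger gadget with a sharper tail bound or a staggered adversary'' needed to reach $\Omega(\sqrt{n/\log n})$ is precisely the missing content of the proof, not a technicality. Worse, the instance family you propose cannot be repaired within your framework: since all gadgets wake at round $1$, every node's local age equals global time, so the nodes effectively have synchronized clocks, and the algorithm of Section~\ref{sec:synch} then computes a stable MIS in $\bigO(\log^2 n)$ rounds with no knowledge of $n$ whatsoever. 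Hence no construction in which the whole graph wakes simultaneously can force a polynomial lower bound; the adversarial, staggered wake-up schedule is essential.

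The paper's proof uses Proposition~\ref{prop:beep} differently: instead of hoping that the target nodes' own coins stay synchronized, it makes the information they receive independent of their own actions. It case-splits on how $\mathcal{A}$ behaves after hearing only collisions (stay silent forever, or broadcast after $m$ rounds with probability $p'$). In Case~1 the adversary wakes cliques $C_1,\ldots,C_{k-1}$ at staggered rounds so that, with high probability, at least two neighbors of each target clique $U_j$ broadcast in every one of $k-1$ consecutive rounds; the $U_j$ nodes therefore hear a collision whenever they listen, learn nothing, and remain silent. Since at most one node per $C_i$ can be in the MIS and there are $k$ cliques $U_j$ but only $k-1$ cliques $C_i$, the pigeonhole principle produces some $U_j$ that must elect exactly one of its mutually adjacent, identically distributed members in a single shot, which fails with constant probability; with $|C_i|=\Theta(k\log n/p)$ this gives $n=\Theta(k^2\log n)$ and $k=\Theta(\sqrt{n/\log n})$. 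Case~2 uses a chain of $U$-cliques that jam one another (each $U_j$ has $\Theta(\log n/p')$ nodes, so at least two beep when they reach round $m$), giving an even stronger $\Omega(n/\log n)$ bound. This externally generated jamming---collisions produced by other, staggered nodes rather than by the symmetric nodes themselves---is the key idea your proposal is missing.
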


\begin{proof}
We fix any algorithm $\mathcal{A}$ and use Proposition~\ref{prop:beep} to split 
the
analysis into three cases. In all cases we show that there is a family
of graphs on which, with probability $1-o(1)$, algorithm $\mathcal{A}$ 
does not locally converge to an MIS if it is run for $o(\sqrt{n/\log 
n})$ rounds.
  
We first ask what happens with nodes running algorithm $\mathcal{A}$ that hear
only silence after waking up.
Proposition~\ref{prop:beep} implies that either nodes remain silent
forever, or there are constants $\ell$ and $p$ such that nodes broadcast
after $\ell$ rounds with probability $p$.
In the first case, suppose nodes are in a clique, and observe that no
node will ever broadcast anything. In this case nodes cannot
learn anything about the underlying graph and in particular cannot break 
symmetry between them. Thus, either no node joins the MIS, or all
nodes join the MIS independently with constant probability, in which case
their success probability is exponentially small in $n$.

We can thus apply Proposition~\ref{prop:beep} and assume for the rest of the
argument that nodes running $\mathcal{A}$ that hear only silence after waking
up broadcast after $\ell$ rounds with probability $p$.
Now we consider what happens with nodes running $\mathcal{A}$ that hear
only collisions after waking up. Again, by
Proposition~\ref{prop:beep} we know that either they remain silent
forever, or there are constants $m$ and $p'$ such that nodes 
broadcast after $m$ rounds with probability $p'$.
In the rest of the proof we describe a different execution for each of
these cases.

\begin{figure*}[htpb]
  \centering
  \includegraphics{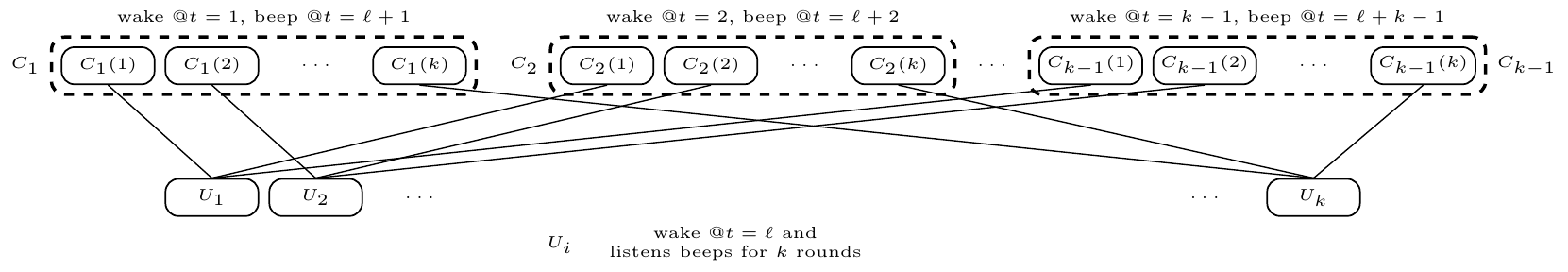}
  \caption{Execution for Case 1}
\end{figure*}

\paragraph{CASE 1: (a node that hears only collisions remains silent
forever)}

We consider a network topology consisting of several interconnected
node-disjoint cliques. For some $k \gg \ell$ to be fixed later, we
take a set of $k-1$ cliques $C_1,\ldots,C_{k-1}$ and a set of $k$
cliques $U_1,\ldots,U_{k}$, where each clique $C_i$ has $\Theta(k\log
n/p)$ vertices, and each clique $U_j$ has $\Theta(\log n)$ vertices.
We consider a partition of each clique $C_i$ into $k$ sub-cliques
$C_i(1),\ldots,C_i(k)$ where each sub-clique has $\Theta(\log n/p)$ 
vertices.  For
simplicity we say two cliques are connected if they form a complete 
bipartite graph.

For every $j\in[k]$ clique $U_j$ is connected to
sub-clique $C_i(j)$ for each $i\in[k-1]$. We consider the execution
where in round $i \in [k-1]$ clique $C_i$ wakes up, and in round
$\ell$ the cliques $U_1,\ldots,U_k$ wake up simultaneously.  Hence,
when clique $U_j$ wakes up, it is connected to sub-clique $C_i(j)$ for
each $i < \ell$.  Similarly for $i\geq \ell$, when clique $C_i$ wakes
up, for all $j \in [k]$, sub-clique $C_i(j)$ is connected to clique
$U_j$.

Because the first nodes wake up in round $1$, no node participates in
round $1$. During the rounds $2,\dots,\ell$, only the nodes in
$C$-cliques are participating and they all remain silent and hear
silence.  In round $\ell+1$ every node in $C_1$ broadcasts with
probability $p$.  Thus with high probability for all $j \in [k]$ at
least two nodes in sub-clique $C_1(j)$ broadcast in round $\ell$. This
guarantees that all the nodes in the $U$-cliques hear a collision
during the first round they are awake, and hence they also listen for
the second round. In turn, this implies that the nodes in $C_2$ hear
silence during the first $\ell-1$ rounds they participate, and again
for $j \in [k]$, with high probability, there are at least two nodes
in $C_2(j)$ that broadcast in round $\ell+2$.

We can extend this argument inductively to show that, with
high probability. For each $i \in [k-1]$ and for every $j \in [k]$ at
least two nodes in sub-clique $C_i(j)$ broadcast in round $\ell+i$.
Therefore, with high probability, all nodes in cliques
$U_1,\ldots,U_k$ hear collisions during the first $k-1$ rounds after
waking up.

Observe that at most one node in each $C_i$ clique can join the MIS, 
that is,
only one of the sub-cliques of $C_i$ has a node in the MIS. Since
there are more $U$-cliques than there are $C$-cliques the pigeon
hole principle implies that there exists at least one clique $U_j$
that is connected to only non-MIS nodes. However, since the nodes in
$U_j$ are connected in a clique, exactly one node of $U_j$ must decide
to join the MIS. Note that all nodes in $U_j$ have the same state
during the first $k-1$ rounds. Therefore, if nodes decide after
participating for at most $k-1$ rounds, with constant probability, 
either no node in $U_j$ joins the MIS, or more than two nodes join the MIS.

Finally since the number of nodes $n$ is $\Theta(k^2\log n+k\log n)$,
we can let $k \in \Theta(\sqrt{n/\log n})$ and the claim follows.

\begin{figure}[htpb]
  \centering
  \includegraphics{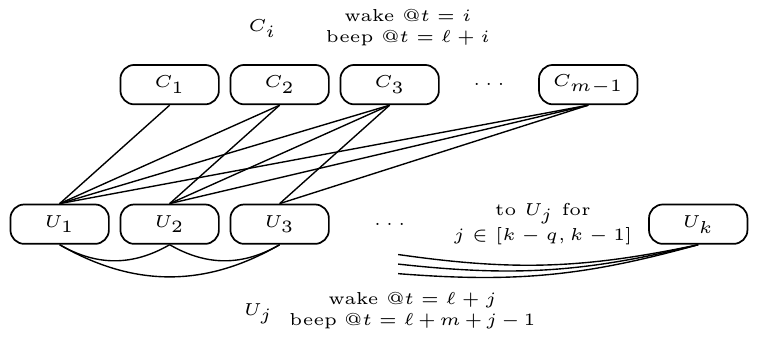}
  \caption{Execution for Case 2}
\end{figure}


\paragraph{CASE 2: (after hearing only collisions, a node beeps with
    probability \boldmath$p'$ after $m$ rounds)}

For some $k \gg m$ to be fixed later let $q=\floor{\frac{k}{4}}$ and
consider a set of $k$ cliques $U_1,\ldots, U_k$ and a set of $m-1$
cliques $C_1,\ldots,C_{m-1}$, where each clique $U_i$ is of size 
$\Theta(\log
n/p')$, and each clique $C_i$ is of size $\Theta(\log n/p)$.  As before, 
we say two cliques are connected if they form a
complete bipartite graph.

If $j > 1$ then $U_j$ is connected to every $U_i$ for
$i \in \set{\max(1,j-q),\ldots,j-1}$ and if $j < m$ then $U_j$ is 
connected to every clique $C_h$ for $h \in \set{j,\ldots,m}$.
We consider the execution where in round $i \in [m-1]$ clique $C_i$ 
wakes up, and in round $\ell+j$ for $j \in [k]$ clique $U_j$ wakes up.

During the rounds $2,\dots,\ell-1$, the nodes in $C_1$ are
participating without hearing anything else, and hence every node in
$C_1$ broadcasts in round $\ell+1$ with probability $p$. Therefore, with 
high probability, at least two nodes in $C_1$ broadcast in round
$\ell+1$.  This guarantees the nodes in $U_1$ hear a collision after
waking up at round $\ell+1$, and therefore they listen in round 
$\ell+2$. In turn this
implies the nodes in $C_2$ will also hear silence during the first
$\ell-1$ rounds they participate, and hence, with high probability, at
least two nodes in $C_2$ broadcast in round $\ell+2$.

As before, we can extend this execution inductively to show that for
$i \in [m-1]$ the nodes in $C_i$ hear silence for the first $\ell-1$
rounds they participate, and, with high probability, at least two
nodes in $C_i$ broadcast in round $\ell+i$.  Moreover, for $j \in [k]$
the nodes in $U_j$ hear collisions for the first $m-1$ rounds they
participate, and hence with high probability there are at least two
nodes in $U_j$ who broadcast in round $\ell+m+j-1$. This implies that
with high probability for $j \in [k-q]$ the nodes in $U_j$ hear
collisions for the first $q$ rounds they participate.

We show that if nodes choose whether or not to join the MIS $q$ rounds
after participating, then they fail with high probability. In
particular consider the nodes in clique $U_j$ for $j \in
\set{q,\ldots,k-2q}$.  These nodes will hear collisions for the first
$q$ rounds they participate, and they are connected only to other nodes
which also hear collisions for the first $q$ rounds they participate.
Therefore, if nodes decide after participating for at most $q$ rounds,
with constant probability either a node and all its neighbors will not 
be in the MIS, or two or more neighboring nodes join the MIS.

Finally since we have $n \in \Theta(m\log n + k\log n)$ nodes, we can
let $k \in \Theta(n/\log n)$ and hence $q \in \Theta(n/\log n)$ and
the theorem follows.
\end{proof}

\subsection{Termination Lower Bound}\label{sec:termination}

In this section we provide a basic observation about
symmetry breaking in beep networks. We conclude that no MIS algorithm
for this model can safely terminate at any time. This justifies why all 
our algorithms guarantee safety by running indefinitely.

We note that for the same reasons as before, the results in this
subsection hold even for local message broadcast with collision
detection. Moreover they hold under the assumptions that nodes wake up
at the same time and know the size of the network.  This includes the
knowledge of an upper bound on the size of the network assumed in
Section~\ref{sec:alg1}, the wake-on-beep model in 
Section~\ref{sec:beepmis} and the
assumption of synchronized clocks in Section~\ref{sec:synch}.  It 
applies thus to all our algorithms, except
the one in Section~\ref{sec:Nmis}.

\begin{lemma}\label{lem:indistinguish}
It is impossible for a node to distinguish at time $t$
with probability more then $1 - 2^{-t+1}$ between an execution in which it
is in isolation and an execution in which it has exactly one neighbor.  
\end{lemma}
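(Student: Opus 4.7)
The plan is to argue by a coupling between the isolated execution $E_1$ and the one-neighbor execution $E_2$. I would fix the random tape of $u$ to be the same in both executions, and in $E_2$ let the neighbor $v$ run the same algorithm with an independent random tape. Under this coupling, $u$'s view in $E_2$ differs from its view in $E_1$ only in rounds where $u$ listens while $v$ beeps; whenever $u$ beeps, the outcome ``no feedback'' is common to both executions, and whenever both listen, both hear silence.

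The heart of the argument is to lower bound the probability of the event $G_{t-1}$ that $u$ and $v$ take identical actions (both beep or both listen) in each of the first $t-1$ rounds in $E_2$. I would prove by induction that whenever $G_{s-1}$ holds, both nodes have observed exactly the same history up through round $s-1$, and therefore enter round $s$ with the same beep probability $p_s$ (a function of their common history plus fresh coin flips). Because their coin flips are independent, the probability that they agree in round $s$ conditional on $G_{s-1}$ equals $p_s^2 + (1 - p_s)^2 \geq 1/2$. Iterating across rounds yields $\Pr[G_{t-1}] \geq 2^{-(t-1)} = 2^{-t+1}$.

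Finally I would observe that conditional on $G_{t-1}$, the coupling produces identical views for $u$ in $E_1$ and $E_2$ at time $t$: in every listening round $u$ has $v$ also listening, so $u$ hears silence in $E_2$ just as in $E_1$. By the standard coupling characterisation of total variation distance, any (possibly randomized) distinguisher can succeed with probability at most $1 - \Pr[G_{t-1}] \leq 1 - 2^{-t+1}$, which is exactly the claim of the lemma.

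The delicate step will be keeping the bookkeeping consistent: the bound $p^2 + (1-p)^2 \geq 1/2$ is clean, but the factor of $2^{-t+1}$ (rather than $2^{-t}$) hinges on how ``at time $t$'' is counted, namely that the view under consideration is the one after $t-1$ completed observation rounds. I also need to state the formal model of a randomized algorithm carefully (each round's action being a fresh randomized function of the observed history) so that the key inductive step ``identical histories imply identical beep probabilities'' is a legitimate conclusion rather than a hidden assumption.
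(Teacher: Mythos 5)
Your proposal is correct and follows essentially the same argument as the paper: the paper likewise observes that in the two-node execution the nodes share the same beep probability $p$ while their histories agree, so symmetry is broken in a round with probability $2p(1-p)\le 1/2$, persists for the relevant rounds with probability at least $2^{-t+1}$, and while it persists the node's view is identical to the isolated execution. Your explicit coupling and the careful handling of the ``at time $t$ means $t-1$ observed rounds'' bookkeeping just make rigorous what the paper states more informally.
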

\begin{proof}
Initially nodes start in identical states and the probability of
distinguishing between being isolated (1-node graph) or having one 
neighbor (2-node graph) is at most $1/2$. In each round the symmetry 
between two neighboring nodes is broken only if one node beeps while the 
other node listens. Since the nodes are assumed to be in identical 
states they both have the same probability $p$ to beep. Therefore the 
probability that symmetry is broken for the first time in any particular 
round is $\max_p 2(1-p)p \leq 1/2$. Hence the probability that after $t$ 
rounds the nodes remain in identical states is at least $2^{-t}$.  
Finally this implies that the probability that after $t$ rounds a node 
cannot distinguish between an execution in which it is isolated and an 
execution where it has exactly one neighbor is at most $1 - 2^{-t+1}$.
\end{proof}

\begin{lemma}\label{lem:notermination}
    An algorithm that solves the MIS problem cannot terminate with a
    correct solution in every execution.
\end{lemma}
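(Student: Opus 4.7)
The plan is to derive a contradiction between a hypothetical always-terminating correct algorithm and the symmetry-breaking limitation already captured by Lemma~\ref{lem:indistinguish}. I would compare the algorithm's behavior on the singleton graph $K_1$ and on the two-vertex edge $K_2$. The unique MIS of $K_1$ contains its lone vertex, so a correct algorithm that terminates in the $K_1$-execution must do so with that vertex outputting ``in MIS''. On $K_2$ correctness forces exactly one of the two symmetric endpoints to output ``in MIS'' and the other ``not in MIS'', so any joint symmetric output is incorrect.

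Assume for contradiction that the algorithm terminates with probability one in every execution. Then in the $K_1$-execution there is some deterministic $T$ such that termination has occurred by round $T$ with probability at least, say, $1/2$. Now consider running the same algorithm on $K_2$. By (the proof of) Lemma~\ref{lem:indistinguish}, the two nodes of $K_2$ remain in identical states throughout rounds $1,\ldots,T$ with probability at least $2^{-T}$. In this symmetric event, each node's listening history is exactly the all-silent history that the lone node in $K_1$ observes, so coupling the local coin flips with those of the $K_1$-execution propagates the decision verbatim: with probability at least $2^{-T} \cdot \tfrac{1}{2} > 0$ both nodes of $K_2$ terminate by round $T$ and both decide ``in MIS''.

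This joint output violates independence, contradicting correctness and establishing the lemma. The only subtle point is ensuring that ``identical states'' really delivers a coupling between the $K_1$- and $K_2$-executions and not merely marginal identity of each node's view; this, however, is exactly what the construction in the proof of Lemma~\ref{lem:indistinguish} provides, since it tracks both nodes' joint randomness and concludes that with the stated probability neither the shared transcript nor the internal states diverge across the two executions. Everything else --- choosing $T$ from almost-sure termination, and combining the two lower-bound probabilities --- is routine.
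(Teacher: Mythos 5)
Your proposal is correct and takes essentially the same route as the paper: both compare the $1$-node and $2$-node graphs and use the symmetry argument behind Lemma~\ref{lem:indistinguish} to conclude that with positive probability a terminating algorithm outputs an incorrect (symmetric) answer on the two-node graph. The only nit is that the factor $2^{-T}\cdot\tfrac{1}{2}$ is not justified as a product, since the symmetry event and the termination event depend on the same coin flips; positivity still follows immediately, e.g.\ because any single $K_1$-trajectory in your event $A$ with positive probability is followed by both $K_2$ nodes with that probability squared.
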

\begin{proof}
  In a 1-node graph a node must join the MIS while in a connected 2-node 
  graph exactly one node must join the MIS and the other must not.
  Therefore, a node cannot terminate if it cannot distinguish between 
  these two cases. Finally, by Lemma \ref{lem:indistinguish} at any time 
  $t$ there is a non-zero probability for any algorithm to not being 
  able to distinguish between these two cases.
\end{proof}

\begin{lemma}\label{lem:lognlowerbound}
There are graphs where the expected time for any algorithm to converge 
to a stable MIS is at least $\frac{\log n}{e}$.
\end{lemma}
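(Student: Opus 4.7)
The plan is to take $G_n$ to be the graph on $n$ vertices consisting of a perfect matching, i.e., $n/2$ disjoint edges, and to consider the execution in which every node wakes in round $1$. In each pair $\{u_i,v_i\}$ a stable MIS must place exactly one endpoint in the MIS-state and the other in the inactive-state, so the stabilization time $T_i$ of pair $i$ is bounded below by the first round in which the two endpoints cease to be in identical states. The argument inside the proof of Lemma~\ref{lem:indistinguish} already establishes that two symmetric neighbors break symmetry in any given round with probability at most $\max_{p \in [0,1]} 2p(1-p) = 1/2$, and hence $\Pr[T_i > t] \geq 2^{-t}$.

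Next I would use independence across the $n/2$ matching components. Since the pairs lie in disjoint connected components and no node ever receives an external signal, each $T_i$ is a function of only the private coin flips of the two nodes in pair $i$, so $T_1,\ldots,T_{n/2}$ are mutually independent. Writing $T = \max_i T_i$ for the global stabilization time, this gives
\[
\Pr[T \leq t] \;=\; \prod_{i=1}^{n/2} \Pr[T_i \leq t] \;\leq\; (1-2^{-t})^{n/2} \;\leq\; \exp\!\left(-\frac{n}{2^{t+1}}\right).
\]

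Finally I would turn this tail bound into an expectation lower bound via $E[T] = \sum_{t \geq 0} \Pr[T > t]$. For every $t \leq \log_2 n$ we have $n/2^{t+1} \geq 1/2$ and therefore $\Pr[T > t] \geq 1 - e^{-1/2} > 1/e$, so summing over $t = 0,1,\ldots,\lfloor \log_2 n \rfloor$ yields $E[T] \geq (\lfloor\log_2 n\rfloor+1)/e \geq (\log n)/e$.

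The step I expect to take the most care with is the independence claim; it relies on fixing an oblivious adversary that wakes all nodes simultaneously and on the observation that inside each 2-node component neither endpoint hears any signal that depends on events outside its own component. Once this is in place, the symmetry-breaking bound of Lemma~\ref{lem:indistinguish} is applied componentwise and the rest reduces to the elementary summation above.
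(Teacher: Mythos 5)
Your proof is correct and takes essentially the same route as the paper's: the same graph of $n/2$ disjoint neighboring pairs, the same per-pair symmetry-breaking bound extracted from the argument of Lemma~\ref{lem:indistinguish}, and independence across components yielding $\Pr[T\le t]\le(1-2^{-t})^{n/2}$. The only deviation is in the final bookkeeping: you finish by summing tail probabilities $E[T]=\sum_{t\ge 0}\Pr[T>t]$ over $t\le\log_2 n$, whereas the paper applies a one-point Markov-style bound at $k=\log n$; both yield the claimed $\log n/e$ bound.
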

\begin{proof}
Consider a graph of $n/2$ disjoint pairs of neighboring nodes.  A MIS 
algorithm can only terminate if it has broken the
symmetry in each component where breaking the symmetry is independent 
between components.
By Lemma~\ref{lem:indistinguish} for any $t$ the probability for this to 
happen is at most $(1 - 2^{-t})^{n/2}$. Using Markov's inequality the 
expected time until all pairs break symmetry is at least $k (1 - 
2^{-k})^{n/2}$ for any $k$. Setting $k = \log n$ shows that the expected 
time to compute an MIS is at least $\log n (1-\frac{1}{n})^{n/2}$ and 
for $n\ge 2$ this implies the expected time to compute an MIS is at 
least $\log n/e$.
\end{proof}

All our algorithms will always (eventually) converge to a stable MIS, 
and with high probability they converge to an MIS in polylogarithmic 
time.
Lemma \ref{lem:notermination} and Lemma \ref{lem:lognlowerbound}
imply that both properties are best possible.  A Las Vegas algorithm is 
one which always produces the correct output but whose running time is 
probabilistic. Conversely, a Monte Carlo algorithm is one whose running 
time is deterministic but only produces the correct output with high 
probability. As we pointed out, all the algorithms presented in this 
paper are Las Vegas. However by assuming an upper bound on $n$ it is 
possible to turn any of these algorithms into a Monte Carlo algorithm.  
Specifically,  it suffices to add an early-stopping criteria (using the 
upper bound on $n$) once the output is correct with high probability.
Another alternative to convert these Las Vegas algorithms to Monte Carlo 
is to endow nodes with unique identifiers.  Specifically, using these 
identifiers it is possible to augment the algorithms to detect the case 
where two neighboring nodes are in the MIS state with certainty in 
asymptotically the same round complexity as the bit length of the 
identifiers. Yet another alternative to circumvent Lemma 
\ref{lem:indistinguish} is to assume sender-side collision detection, an 
assumption which we consider detail in Section \ref{sec:Nmis}. This 
allows two nodes to detect in a single round whether they are beeping 
alone, or if there is another neighbor beeping. Our algorithm in 
Section~\ref{sec:Nmis} leverages this assumption to terminate after 
$\bigO(\log^2 n)$ rounds.


  \section{Using an Upper Bound on \boldmath$n$} \label{sec:alg1}

\def\plength{c \log N}

In this section we give an example demonstrating that knowing a priori 
information about the network can drastically
change the complexity of the problem. More precisely we show that
by giving all nodes a (crude) upper bound $N > n$ on the total number of 
nodes participating in the system, it is possible to circumvent the 
polynomial lower bound for Section~\ref{sec:lowerbound} and design an 
algorithm that locally converges to an MIS in polylog time.
It is not required that all nodes are given the same upper bound.
We will describe an algorithm that guarantees that 
$\bigO(\log^2 N \log n)$ rounds after a node wakes up, it knows whether 
it belongs to the MIS or if it  has a neighbor in the MIS.  This implies 
that if the known upper bound is polynomial in $n$ its possible to 
design an algorithm that locally converges to an MIS in $\bigO(\log^3 
n)$ rounds.

\paragraph{Algorithm.}

If at any point during the execution a node hears a beep while listening it 
restarts the algorithm.
When a node wakes up (or it restarts), it stays in an inactive state
where it listens for $c\log^2 N$ consecutive rounds.
After this inactivity period, nodes enter a competing state where rounds are 
grouped
into $\log N$ phases of $\plength$ consecutive rounds. Observe that due 
to the adversarial wake up and the restarts, the phases of different 
nodes may not be synchronized.
In each round of phase $i$, a node beeps with probability $2^i/8 N$, and
otherwise it listens. Therefore by phase $\log N$ a node beeps
with constant probability in every round.
After successfully going through the $\log N$ competing phases
(recall that when a beep is heard during any phase, the algorithm
restarts) a node assumes it has joined the MIS and goes into an infinite loop 
where it beeps half of the time to claim its MIS status while listening 
the rest of the time to detect if a neighboring node is also in the MIS. 

\begin{algorithm}
  \caption{Upper bound on the size of the network.}
  \label{alg:CHKmis1}
  \begin{algorithmic}[1]
    \State Restart here whenever receiving a beep.
    \State {\bf for} $c \log^2 N$ rounds {\bf do} listen \Comment{Inactive}
    \For{$i \in \set{1,\ldots,\log N}$} \Comment{Competing}
      \For{$\plength$ rounds}
      \State with probability $2^i/(8 N)$ beep, otherwise listen
      \EndFor
    \EndFor
    \State {\bf forever at each round}
    \State \quad with probability $\half$ beep then listen
    \State \quad else listen then beep
    \Comment{MIS}
  \end{algorithmic}
\end{algorithm}

\begin{theorem}\label{thm:simpleMIS}
    If $N$ is an upper bound on $n$ known to the nodes, 
    Algorithm~\ref{alg:CHKmis1} locally converges to an MIS in $O(\log^2 
    N \log n)$ rounds.
\end{theorem}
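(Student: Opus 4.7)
The plan is to track a focal node $v$ through the three states of the algorithm (inactive, competing, MIS) and show that within $O(\log^2 N \log n)$ rounds after $v$ wakes up, $v$ has become \emph{stable}. I will break the analysis into three steps that analyze (i) how fast MIS membership propagates to neighbors, (ii) what happens during one full traversal of the competing phase, and (iii) how many such traversals are needed before $v$ stabilizes.

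\textbf{Step 1 (Fast MIS detection).} First I would show that once a node $u$ is in the MIS state at time $t$, every current or future neighbor $w$ of $u$ is either knocked back to inactive or itself enters MIS within $O(\log N)$ rounds with high probability. Since an MIS node beeps with probability $1/2$ in every round, a neighbor that is either inactive or competing (and hence listens in at least one sub-slot of every round) hears at least one of $u$'s beeps within $O(\log N)$ rounds and restarts. The randomized ``beep-then-listen / listen-then-beep'' schedule handles the corner case of two adjacent MIS nodes: in each round, with probability $1/2$ one is listening while the other beeps in the same sub-slot, so any pair of MIS neighbors detects each other within $O(\log N)$ rounds. In particular the inactive prefix of length $c \log^2 N$ is long enough to absorb any pre-existing MIS neighbor of $v$: if one exists when $v$ wakes, $v$ restarts during its prefix and ends up stable as an inactive neighbor of an MIS node.

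\textbf{Step 2 (Local progress per competing cycle).} Next, consider one full traversal of the $\log N$ competing phases, of total length $\Theta(\log^2 N)$. Let $D_v$ denote the set of $v$'s neighbors that are in the competing phase during this window, and define the critical phase $i^\star = \lceil \log(8N/(|D_v|+1)) \rceil$, so the beep probability $p_{i^\star} = 2^{i^\star}/(8N)$ satisfies $p_{i^\star} \in [1/(|D_v|+1),\, 2/(|D_v|+1))$. A direct Bernoulli calculation gives
\[ (|D_v|+1)\, p_{i^\star}\, (1 - p_{i^\star})^{|D_v|} \;=\; \Omega(1), \]
bounded below by a constant like $1/(2e)$, so in each round of phase $i^\star$ exactly one node of $\{v\} \cup D_v$ beeps with constant probability. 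Taking $c$ large enough, over the $c \log N$ rounds of phase $i^\star$ a unique beeper $u \in \{v\} \cup D_v$ emerges w.h.p. If $u = v$ then every competing neighbor of $v$ hears $v$'s beep and restarts while $v$ continues toward MIS status; if $u$ is a neighbor of $v$ then $v$ hears $u$'s beep and restarts. Either way the cycle produces concrete local progress.

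\textbf{Step 3 (Amortized convergence).} Combining Steps~1 and~2, each cycle of length $O(\log^2 N)$ makes, with constant probability, one of the following occur: $v$ becomes MIS (and by Step~1 its neighbors stabilize), some neighbor of $v$ becomes MIS (and by Step~1 $v$ stabilizes as its inactive neighbor), or $v$ restarts ``having learned'' that a stronger competitor lies in its neighborhood. Amplifying this constant success probability over $\Theta(\log n)$ independent cycles via a Chernoff bound yields that $v$ is stable within $O(\log^2 N \log n)$ rounds w.h.p. The main obstacle will be the asynchrony inherent in Step~2: different neighbors of $v$ are at different points of their own competing phases, so $D_v$ fluctuates during $v$'s cycle, and a neighbor whose own phase differs from $i^\star$ may beep during $v$'s phase $i^\star$ with a ``wrong'' probability. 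The role of the $c \log^2 N$ inactive prefix is precisely to smear out the starting offsets of $v$'s competing neighbors across a full cycle, so that a union bound over the $\log N$ possible phase offsets absorbs the contribution of mismatched neighbors at the cost of only a $1/N^{\Omega(1)}$ loss in the success probability. Handling this carefully, together with the dependence between Steps~1 and~2 across successive restarts, is where the bulk of the technical work should go.
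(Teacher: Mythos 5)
There is a genuine gap, and it sits exactly where you defer the ``bulk of the technical work.'' Your Step 2 computes a critical phase $i^\star$ from $|D_v|$ and argues that in each round of that phase a unique node of $\set{v}\cup D_v$ beeps with constant probability -- but this calculation is only valid if every node in $D_v$ beeps with the same probability $p_{i^\star}$, i.e.\ if all competitors are phase-synchronized with $v$. They are not: phases are offset by adversarial wakeups and by restarts, so during $v$'s phase $i^\star$ a neighbor may be in a much later phase and beep with probability up to a constant ($2^{\log N}/8N=1/8$ per round), which destroys the unique-beeper estimate and typically just knocks $v$ back to inactive. Your proposed repair -- that the $c\log^2 N$ inactive prefix ``smears out'' the offsets so you can union bound over $\log N$ phase offsets -- does not work: the offsets are determined by the adversary and by the execution history, not drawn at random, so there is nothing to average over. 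The paper resolves precisely this difficulty with a different quantity and a self-referential argument: it tracks the \emph{sum} of the neighbors' beep probabilities (the beep potential $E_v(t)$), shows it can only grow slowly because each node's probability at most doubles every $c\log N$ rounds (Lemma~\ref{lem:slowchange}), and concludes that a node that is \emph{still competing} at time $t$ must have $E_v(t)<\half$ w.h.p., since otherwise the potential would have been $\ge 1/8$ throughout the previous $c\log N$ rounds and $v$ would already have heard a beep and restarted (Lemma~\ref{lem:lowEu}). No assumption of phase alignment is ever needed. This potential bound is the key missing idea in your proposal.

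The second problem is your Step 3 accounting: you treat ``$v$ restarts, having learned that a stronger competitor lies in its neighborhood'' as one of the good outcomes to be amplified by Chernoff. Restarting is not progress toward the statement you must prove ($v$ in the MIS or covered by an MIS neighbor within $O(\log^2 N\log n)$ rounds): with mismatched phases, competitors can repeatedly eliminate one another and cycle through restarts indefinitely, so a constant probability of ``MIS, covered, or restart'' per cycle amplifies to nothing. The paper's proof of Lemma~\ref{lem:bterminate} instead argues that every node has an \emph{event} (beeps or hears a beep) at least once every $O(\log^2 N)$ rounds -- because a node hearing nothing completes all phases and joins the MIS -- and that, conditioned on an event and on the low-potential guarantee, with probability at least $1/e$ some node in the closed neighborhood beeps \emph{alone} (Lemma~\ref{lem:goodbeep}, proved via a partition of the event ``someone beeps''), which is what yields an MIS node in $N(v)\cup\set{v}$ and hence genuine progress; only this is amplified over $O(\log n)$ events. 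You would also need the safety half (Lemma~\ref{lem:safety}), which your Step 1 only partially addresses; as written, your argument establishes neither the progress bound nor a complete safety/stability claim.
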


We remark that Algorithm~\ref{alg:CHKmis1} is very robust. It is not hard to show that
it is self-stabilizing, that is, nodes can be initialized in any state and with any 
setting of internal variables without affecting the guarantees. It also works as-is 
under adversarial crashes, that is, if we give the adversary the power to
crash any set of nodes in every round. However, in the presence of 
crashes, no algorithm can locally converge to an MIS, since an inactive 
node with a
single neighboring MIS node cannot always immediately join the MIS when its MIS 
neighbor crashes.
Nevertheless, Algorithm~\ref{alg:CHKmis1} computes an MIS in $O(\log^2 N \log n)$ rounds.
We also refer to the discussion in Section~\ref{sec:termination} which 
shows that without additional assumptions this Las Vegas algorithm can 
be turned into a Monte Carlo algorithm that with high probability gives 
a correct answer and always terminates in $O(\log^3 N)$ steps.

\paragraph{Safety.}

We first prove the safety property of Algorithm~\ref{alg:CHKmis1} in the following lemma:

\begin{lemma}\label{lem:safety}
  Two neighboring nodes do not join the MIS with high probability. Moreover, in 
  the low probability event that two neighboring nodes join the MIS, then almost 
  surely eventually one of them becomes inactive.
\end{lemma}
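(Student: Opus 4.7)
The plan is to handle the two assertions separately. For the safety statement, I would fix a pair of neighbors $u, v$ and suppose both are in the MIS state at some round $T$. Let $T_u \le T_v \le T$ be the rounds at which they most recently entered MIS (so neither restarted between its entry and $T$, otherwise it would not still be in MIS at $T$). I would then zoom in on $v$'s last competing phase, the $c\log N$ slots ending at round $T_v - 1$. The key observation is that throughout these slots $u$ must be either still in its own phase $\log N$ (if $T_u > T_v - c\log N$) or already in the MIS state (if $T_u \le T_v - c\log N$), and in both states $u$ beeps with probability at least $1/8$ per slot, using fresh random coins independent of $v$'s. Since $v$ listens with probability $7/8$ in each such slot, the single-slot probability that $v$ does not trigger a restart is at most $1 - (7/8)(1/8) = 57/64$, and because per-slot coins are independent, the probability that $v$ completes the entire phase without ever hearing $u$ is bounded by $(57/64)^{c \log N} \le N^{-\Omega(c)}$. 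Taking $c$ large enough and union-bounding over the $\bigO(n^2)$ pairs of neighbors (and over polynomially many rounds within the convergence window) then yields the desired high-probability safety conclusion.

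For the ``moreover'' part, I would analyze a single MIS super-round, consisting of two consecutive time slots in which each node independently picks either beep-then-listen or listen-then-beep. If $u$ and $v$ make opposite choices, which happens with probability $1/2$, then in one slot $v$ listens while $u$ beeps and in the other slot $u$ listens while $v$ beeps, so both nodes hear a beep and both restart (becoming inactive). Hence the probability that two neighbors simultaneously in MIS both remain there for $t$ more super-rounds is at most $2^{-t}$, which tends to $0$; by Borel--Cantelli this guarantees that almost surely at least one of them eventually becomes inactive.

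The main obstacle will be justifying the conditional-independence step in the safety argument. The subtlety is that $u$'s state in a given slot depends on the shared history (because $u$'s past restarts are triggered by $v$'s beeps and vice versa), so one cannot naively multiply per-slot probabilities. I plan to handle this by first conditioning on $u$'s state trajectory over $v$'s last phase, and then using the fact that \emph{regardless} of which of the two possible states $u$ is in per slot, the per-slot probability that a fresh-coin listen of $v$ coincides with a fresh-coin beep of $u$ is at least $7/64$; at that point the independence of the per-slot coin flips of $u$ and $v$ gives the desired product bound, and the correlation induced by the MIS super-round structure only strengthens the estimate.
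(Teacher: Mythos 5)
Your proposal is correct and follows essentially the same route as the paper's proof: both arguments rest on the observation that a node must survive a final competing phase of $c\log N$ rounds with constant beep/listen probabilities, so a neighboring pair (one competing, one competing-or-in-MIS) avoids a listen/beep mismatch for that long only with probability $N^{-\Omega(c)}$, and once both are in the MIS state each two-round block independently yields a constant probability of a mismatch, giving the almost-sure eventual resolution. Your extra care about conditioning and the union bound only sharpens the paper's sketch; the minor slips (the two MIS blocks need not be aligned, and in an opposite-choice block only the first listener restarts rather than both) do not affect the argument, since a constant per-block escape probability and ``at least one node becomes inactive'' are all that is needed.
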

\begin{proof}
Observe that a node must go through an interval of at least $\plength$ rounds in which it is both
listening and beeping with constant probability in every round. If a node is competing while
another node is in the MIS or if two nodes are competing for the MIS in their last phase
both nodes need to choose the same action (beep or listen) for 
$\plength$ rounds in order for both nodes to be in the MIS state 
simultaneously. Therefore, for a sufficiently large constant $c$ this 
event will not happen with high probability. On the other hand, even if 
two neighboring nodes join the MIS, the probability that they both 
remain in the MIS after $k$ rounds is exponentially small in $k$, so it 
follows that eventually almost surely one of the nodes will leave the 
MIS.  \end{proof}

We note that by construction nodes which are in the MIS beep at 
least every three rounds. Hence, if a node is in the MIS and all its neighbors 
are inactive, it follows that the MIS node and its neighbors will remain 
stable indefinitely (or until a neighbor crashes).

\paragraph{Termination.}

Given Lemma~\ref{lem:safety} it remains to show the following lemma to finish the proof of Theorem~\ref{thm:simpleMIS}:

\begin{lemma}
  \label{lem:bterminate}
  With high probability after $\bigO(\log^2 N\log n)$ rounds a node is 
  either in the MIS or has a neighbor in the MIS.
\end{lemma}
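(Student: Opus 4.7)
The plan is to view time as a sequence of epochs, each of length $\Theta(\log^2 N)$ rounds and corresponding to one attempted execution of the inactive phase followed by the $\log N$ competing phases. I will prove that, starting from the beginning of a fresh inactive phase at $v$, within one epoch $v$ becomes stable with probability at least some constant $q > 0$. Since each restart resets the algorithm state, consecutive epochs can be analyzed as essentially independent Bernoulli trials, so running for $\Theta(q^{-1}\log n)$ epochs drives the failure probability below $n^{-c}$.

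For the single-epoch analysis, fix a start-of-inactive moment for $v$ and assume $v$ is not already stable. Let $S \subseteq N_G(v) \cup \{v\}$ denote the set of nodes in $v$'s closed neighborhood that are actively trying to join the MIS during this epoch. Because the beep probabilities $2^i/(8N)$ used in the competing phases form a geometric sequence covering $[1/(4N),\,1/8]$, there is a phase $i^*$ in which each participating node beeps with probability $\Theta(1/|S|)$ per round; when $|S|$ is tiny, phase $\log N$ itself plays this role. The expected number of beepers from $S$ in a single round of phase $i^*$ is thus $\Theta(1)$, so the probability that a given round has a unique beeper is a positive constant. Over the $c\log N$ rounds of the phase, some round produces a unique beeper $u \in S$ with probability exponentially close to $1$ in $c$, and in that round every other listening node of $N_G(u) \cup \{u\}$ restarts.

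After $u$ wins phase $i^*$, I argue that $u$ traverses the remaining $\bigO(\log^2 N)$ rounds of competing phases without being forced to restart. Each competing neighbor that just restarted now sits silent in its own fresh inactive period of $c\log^2 N$ rounds, which comfortably exceeds $u$'s remaining competing window, so former members of $S$ cannot beep during $u$'s final phases. Once $u$ enters the MIS state it beeps in every other round, and any node of $N_G(u)$ (in particular $v$, if $v \neq u$) is pinned in the inactive state by these continual beeps and is therefore stable. Combined with Lemma~\ref{lem:safety}, which ensures that two neighboring MIS claimants do not both persist, this yields a single stably elected MIS node in $N_G(v) \cup \{v\}$ within the epoch.

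The main obstacle is the interaction between the asynchronous adversarial wake-ups and the restart dynamics: a neighbor the adversary wakes up late could in principle disturb $u$ during its critical phases. The saving grace is that any such newly woken neighbor begins with $c\log^2 N$ silent listening rounds, which outlasts $u$'s entire remaining competing window; choosing $c$ large enough so that all the high-probability events (existence of a unique beeper in phase $i^*$, survival of $u$ through the later phases, and absence of spurious late-wake-up disturbances) hold simultaneously via a union bound is the principal technical step.
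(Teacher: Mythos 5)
Your epoch-based plan founders on a point the paper is explicit about: because of adversarial wake-ups and restarts, the phases of different nodes are \emph{not} synchronized. Your key step assumes that all currently active nodes in $N_G(v)\cup\{v\}$ march through the competing phases together, so that there exists a single phase $i^*$ during which every member of $S$ beeps with probability $\Theta(1/|S|)$ per round. Nothing in the algorithm guarantees such an aligned round ever occurs: at any given time some neighbors may be in phase $\log N$ beeping with constant probability while others are freshly restarted, and restarts keep re-scrambling the offsets. The paper's proof avoids any synchrony assumption by arguing through the \emph{beep potential} $E_v(t)$: Lemma~\ref{lem:slowchange} shows the potential of a neighborhood cannot have risen quickly (each node's probability at most doubles per phase and starts at $\Theta(1/N)$), and Lemma~\ref{lem:lowEu} concludes that whenever a node is still competing, its neighborhood potential is below $\half$ with high probability --- otherwise it would have heard a beep in the preceding $c\log N$ rounds and restarted. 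That is the ingredient your argument is missing, and it is what makes the ``constant probability of a lone beeper per event'' claim (Lemma~\ref{lem:goodbeep}) valid for arbitrary, adversarially skewed phase offsets.

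A second gap: your unique beeper $u$ is only required to be unique \emph{within} $S\subseteq N_G(v)\cup\{v\}$, but for $u$ to actually reach the MIS state it must beep alone with respect to its \emph{entire} neighborhood. A neighbor $w$ of $u$ outside $N_G(v)\cup\{v\}$ that beeps in the same round does not hear $u$, does not restart, and can later force $u$ to restart (or tie with it) during $u$'s remaining competing window, so your ``all former members of $S$ are silent'' survival argument does not close the case. The paper handles this by requiring $E_u(t)\le\half$ for \emph{every} $u\in N(v)\cup\{v\}$ (i.e., controlling beeping in the two-hop neighborhood via Lemma~\ref{lem:lowEu} plus a union bound) and defining ``beeps alone'' relative to all of the beeper's neighbors. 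Finally, your treatment of consecutive epochs as independent Bernoulli trials is also not immediate, since $v$'s epochs are not aligned with those of its neighbors; the paper sidesteps this by counting ``events'' (rounds in which $v$ beeps or hears a beep), showing one occurs every $\bigO(\log^2 N)$ rounds, and applying the constant success probability per event.
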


We prove this in three steps. First we show that for any node, the sum 
of the beep probabilities of its neighbors cannot increase ``quickly'' 
after $\plength$ rounds.
We then use this to to show that when a node $u$ is competing, then with 
constant probability the sum of the beep probabilities of the neighbors 
of $u$ are less than a constant.
Finally, we show that a node $u$ hears a beep or produces a beep every 
$O(\log^2 N)$ rounds. Every time this happens there is a constant 
probability that either a neighbor of $u$ joins the MIS or that $u$
joins the MIS. Therefore, with high probability the algorithm produces an 
MIS after $O(\log^2 N \log n)$ rounds.

First we introduce some additional definitions. We use $b_u(t)$ to 
denote the \emph{beep probability} of
node $u$ in round $t$.  The \emph{beep potential} of a set of nodes $S
\subseteq V$ in round $t$ is defined as the sum of the beep
probabilities of nodes in $S$ in round $t$, and denoted by
$E_S(t)=\sum_{u \in S} b_u(t)$.  Of particular interest is the beep
potential of the neighborhood of a node, we will use $E_v(t)$ as a
shorthand notation for $E_{N(v)}(t)$.

The next lemma shows that if the beep potential of a particular set of
nodes is larger than a (sufficiently large) constant in a particular 
round, then it was larger than a constant in the preceeding $\plength$ 
rounds. Informally, this is true because the beep probability of every 
node
increases slowly.

\begin{lemma}
  \label{lem:slowchange}
  Fix a set $S \subseteq V$. If $E_S(t) \ge \lambda$ in round $t$,
  then $E_S(t') \ge \half\lambda-\frac{1}{8}$ for all $t' \in
  [t-\plength,t]$.
\end{lemma}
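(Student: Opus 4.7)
My plan is to prove the lemma by a per-node case analysis of the state transitions of each $u \in S$ within the window $[t', t]$ of length at most $c\log N$, and then sum the individual slacks over $u\in S$.

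First I would establish a structural observation: because the inactive period has length $c\log^2 N$ and each of the $\log N$ competing phases has length exactly $c\log N$, each node can undergo at most one ``state change'' that affects its beep probability during the window. Moreover, if that change is a restart (triggered by the node hearing a beep during a listening round), then the subsequent inactive period of length $c\log^2 N > c\log N$ guarantees $b_u(t) = 0$, so the contribution of $u$ to $E_S(t) - 2 E_S(t')$ is $\le 0$.

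Next, I would enumerate the remaining cases and bound $b_u(t) - 2 b_u(t')$ in each:
\begin{itemize}
\item Same state throughout, or ``competing phase $i$ to phase $i{+}1$'' for $i < \log N$: the beep probability is at most doubled, so the slack is $\le 0$.
\item ``Inactive to phase 1'': $b_u(t) = 1/(4N)$ and $b_u(t') = 0$, slack $1/(4N)$.
\item ``Phase $\log N$ to MIS state'': $b_u(t') = 1/8$ and $b_u(t) = 1/2$, slack $1/4$.
\end{itemize}

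Finally, I would sum the slacks. The inactive-to-phase-$1$ contributions total at most $|S|/(4N) \le 1/4$ since $|S| \le n \le N$. The main obstacle---and what I expect to be the sharpest part of the argument---is controlling the total slack from MIS transitions, since each contributes $1/4$ and in principle many nodes in $S$ could simultaneously transition to MIS within the window. The way I would handle this is to exploit the fact that any node entering the MIS state immediately begins beeping with constant probability in each subsequent round, which forces any other neighboring node in $S$ that is attempting to complete its own MIS transition to restart rather than both completing simultaneously. Combined with a sufficiently large choice of the constant $c$ in the phase length, this structural interaction bounds the aggregate MIS-transition slack by a constant that, together with the inactive-to-phase-$1$ contribution, is absorbed into the $1/8$ term in the lemma statement.
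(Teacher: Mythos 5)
Your treatment of the competing nodes coincides with the paper's argument: within a window of $c\log N$ rounds a node crosses at most one phase boundary, a node that restarts inside the window is still inactive at round $t$ (so its slack is non-positive), and the nodes entering phase $1$ from the inactive state contribute at most $|S|/(4N)\le 1/4$ in total, which is exactly the paper's bound $E_P(t)\le 1/4$; your inequality ``total slack $E_S(t)-2E_S(t')\le 1/4$'' is just an algebraic rewriting of the paper's chain $E_S(t')\ge E_Q(t')\ge \half\bigl(E_S(t)-E_P(t)\bigr)\ge \half\lambda-\frac{1}{8}$.

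The genuine gap is your last case, the transition from phase $\log N$ into the MIS loop, and your proposed fix for it does not work. First, $S$ is an arbitrary vertex set (in the application $S=N(v)$), so its members need not be adjacent to one another; a node of $S$ that joins the MIS therefore does not force the other members of $S$ to restart, and arbitrarily many of them can complete their MIS transition inside the same window. Second, the budget does not close even for a single such transition: the $-\frac{1}{8}$ in the statement corresponds to a total slack allowance of $1/4$, which the inactive-to-phase-$1$ contributions can essentially exhaust on their own, while one MIS transition adds another $\half-2\cdot\frac{1}{8}=1/4$; so this term cannot be ``absorbed'' --- it has to be zero. The paper eliminates the case by definition rather than by argument: the beep potential counts only the probabilities of \emph{competing} nodes, so nodes that are inactive or in the MIS loop contribute nothing to $E_S(t)$ (this is the role of the set $R$ in the paper's proof, and it is harmless in the later applications since a beep from an MIS neighbor only helps there). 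Under that convention your remaining cases reproduce the paper's proof; under your literal reading, in which an MIS-state node has beep probability $\half$, the lemma itself fails (already two non-adjacent nodes of $S$ moving from probability $\frac{1}{8}$ at $t'$ to $\half$ at $t$ violate the claimed bound), so no local repair of the MIS case could succeed without changing the definition of the potential.
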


\begin{proof}
    First we define a partition of the nodes in $S$. Let $P\subseteq S$ 
    be the nodes in $S$ that are in phase $1$ at round $t$, let $Q$ be 
    the set of nodes which are in phase $i > 1$ at round $t$, and let 
    $R$ be the remaining nodes (i.e., the ones which are not competing).  
    By definition the nodes in $R$ do not contribute to the beep 
    potential of the nodes in $S$, we have:
    
    \[
    E_S(t)=\underbrace{\sum_{u \in P} b_u(t)}_{E_P(t)}+ 
    \underbrace{\sum_{u \in Q} b_u(t)}_{E_Q(t)}
    \]

    Fix $t'$ to be any round in the range $[t-\plength,t]$. Since
    nodes in $P$ are in phase $1$ in round $t$, in round $t'$ they are
    either in the inactive state or in phase $1$. Thus for $u \in P$
    we have $b_u(t') \le b_u(t)=1/(4 N)$, and since there are at most
    $|P| \le |S| \le N$ nodes, we have $E_P(t') \le E_P(t)= (N/4) N =
    1/4$.

  Similarly nodes in $Q$ are in phase $i > 1$ in round $t$ and in phase 
  $i$ or $i-1 \ge 1$ in round $t'$.
  Thus for $u \in Q$ we have $b_u(t') \ge \half b_u(t)$ and hence 
  $E_Q(t') \ge \half E_Q(t) = \half(E_S(t)-E_P(t)) \ge
  \half\lambda-\frac{1}{8}$.
  Finally since $E_S(t') \ge E_Q(t')$, we get $E_S(t') \ge
  \half\lambda-\frac{1}{8}$.
\end{proof}

Using the previous lemma, we show that with high probability nodes that
are competing have neighborhoods with a ``low'' beep potential.
Informally this is because if a node had a neighborhood with a ``high''
beep potential, the previous result implies it would have had a high beep
potential during the previous $c\log N$ rounds, and therefore with high 
probability it would have been kicked out of the competition in a 
previous round.

\begin{lemma}
  \label{lem:lowEu}
  With high probability, if node $v$ is competing in round $t$ then $E_v(t) <
  \half$.
\end{lemma}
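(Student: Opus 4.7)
Plan: I will prove the contrapositive by showing that the probability that $v$ is competing at round $t$ with $E_v(t) \geq \half$ is at most $n^{-c'}$ for any desired constant $c'$, by choosing the constant $c$ in Algorithm~\ref{alg:CHKmis1} sufficiently large. The argument has three ingredients: a backward-in-time propagation of the beep potential, a structural consequence of the restart rule, and a per-round conditional estimate of the restart probability.

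First, I apply Lemma~\ref{lem:slowchange} with $S = N(v)$ and $\lambda = \half$ to conclude that the assumption $E_v(t) \geq \half$ forces $E_v(t') \geq \tfrac{1}{8}$ for every round $t' \in [t - c\log N, t]$. Second, I exploit the algorithm's restart mechanic: any restart transitions $v$ into the inactive state for the next $c\log^2 N$ rounds, which strictly exceeds the length $c\log N$ of our window. Consequently, if $v$ is competing at round $t$, then $v$ did not hear any beep while listening at any round in $[t - c\log N + 1, t]$; in particular, throughout this window $v$ was either inactive (always listening) or competing in some phase $i \leq \log N$ (listening with probability $1 - b_v(t') \geq 7/8$, since $b_v(t') \leq 2^{\log N}/(8N) = 1/8$).

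The core of the argument is a one-step conditional estimate. I condition on the filtration up to round $t' - 1$, which determines every node's state and hence the beep probabilities $b_u(t')$; under this conditioning the coin flips made at round $t'$ by distinct nodes are mutually independent. When the trajectory so far satisfies $E_v(t') \geq \tfrac{1}{8}$, the probability that at least one neighbor of $v$ beeps at round $t'$ is $1 - \prod_{u \in N(v)}(1 - b_u(t')) \geq 1 - e^{-E_v(t')} \geq 1 - e^{-1/8}$, and independently $v$ listens with probability at least $7/8$. Hence the conditional probability that $v$ restarts at round $t'$ is at least an absolute constant $\alpha > 0$.

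Chaining these per-step bounds by iterated conditioning across the $c\log N$ rounds of the window yields that the probability $v$ avoids any restart throughout is at most $(1-\alpha)^{c\log N}$, which is $\leq n^{-c'}$ once $c$ is taken large enough. The main subtlety to address carefully is that $E_v(t')$ is itself a random variable depending on the neighbors' trajectories (which in turn can depend on $v$'s own past beeps). The filtration-based conditional argument sidesteps this: at each round of the window, the event we are trying to avoid (a restart) has conditional probability at least $\alpha$ whenever the trajectory so far witnesses $E_v(t') \geq \tfrac{1}{8}$, so the product bound is valid on the event of interest without needing independence of $E_v$ across rounds.
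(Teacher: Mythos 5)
Your proposal is correct and follows essentially the same route as the paper's own proof: apply Lemma~\ref{lem:slowchange} to force $E_v(\tau)\ge \tfrac{1}{8}$ throughout the preceding $c\log N$ rounds, note that hearing a beep anywhere in that window (which triggers a restart and a $c\log^2 N$-round inactive period) would prevent $v$ from competing at round $t$, and amplify a constant per-round probability of ``listen and hear a beep'' over the $c\log N$ rounds. The only differences are presentational: your filtration/iterated-conditioning argument treats the dependence between rounds more carefully than the paper's direct product bound over the events $L_v(\tau)$, and you use $b_v\le \tfrac{1}{8}$ (justified via the restart structure) where the paper simply uses $b_v\le \tfrac{1}{2}$, which only changes the constant.
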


\begin{proof}
  Fix a node $v$ and a time $t$, we will show that if $E_v(t)\ge \half$
  then with high probability node $v$ is not competing at time $t$.
  
  Let $L_v(\tau)$ be the event that node $v$ listens in round $\tau$ and
  there is a neighbor $u \in N(v)$ that beeps in round $\tau$. First we
  estimate the probability of the event $L_v(\tau)$.

  \begin{align*}
    \PR{L_v(\tau)} &= (1-b_v(\tau))\cdot\paren{1-\prod_{u \in
    N(v)}(1-b_u(\tau))} \\ & \ge
    (1-b_v(\tau))\cdot\paren{1-\exp\paren{-\sum_{u\in N(v)} b_u(\tau)}} \\
    &= (1-b_v(\tau))\cdot\left(1-e^{-E_v(\tau)}\right).
  \end{align*}

  From Lemma~\ref{lem:slowchange} we have that if $E_v(t)\ge \half$
  then $E_v(\tau) \ge \frac{1}{8}$ for $\tau \in [t-\plength, t]$,
  together with the fact that $b_v(\tau) \le \half$ this implies that
  
  \[
  \PR{L_v(\tau)} \ge \half\paren{1-e^{-1/8}} > 0.058
  \] 
  for $\tau \in [t-\plength,t]$.

  Let $C_v(t)$ be the event that node $v$ is competing in round $t$.
  Observe that if $L_v(\tau)$ occurs for $\tau \in [t-\plength,t]$
  then node $v$ stops competing for at least $\plength$ rounds and hence
  $C_v(t)$ cannot occur. Therefore, the probability that node $v$ does
  not beep in round $t$ is at least:

  \begin{align*}
    \PR{\neg C_v(t)} &\ge \PR{\exists \tau \in
    [t-\plength,t] \st L_v(\tau) \text{ occurs}}\\ &\ge 1-\prod_{\tau=t-\plength}^t 
    (1-\PR{L_v(\tau)})\\
    &\ge 1-\exp\paren{-\sum_{\tau=t-\plength}^t \PR{L_v(\tau)}}.
  \end{align*}
  
  Finally since for $\tau \in [t-\plength, t]$, it holds that
  $\PR{L_v(\tau)} > 0.058$, then for a sufficiently large $c$ we have 
  that node $v$ is not competing in round $t$ with high probability.
\end{proof}

Next, we show that if a node hears a beep or produces a beep in a round
when its neighborhood (and its neighbor's neighborhoods) has a
``low'' beep potential, then with constant probability either it joins
the MIS, or one of its neighbors joins the MIS. In the following lemma 
we say a node beeps alone at time $t$, if that node beeped at time $t$ 
and all of its neighbors listened at time $t$.

\begin{lemma}
    Assume that node $v$ beeps or hears a beep in round $t$ and that
    $E_u(t) \le \half$ for every $u \in N(v)\cup\set{v}$.  Then with
    probability at least $\frac{1}{e}$ either $v$ beeps alone, or one
    of its neighbors beeps alone in round $t$.
  \label{lem:goodbeep}
\end{lemma}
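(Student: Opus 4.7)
My plan is to condition on the identity of the smallest-indexed beeper in $W := N(v) \cup \set{v}$ (under an arbitrary fixed ordering) and show that this distinguished beeper beeps alone with probability at least $1/e$, irrespective of who it is. Fix any enumeration of $W$ and let $w^\star$ denote the smallest-indexed member of $W$ that beeps in round $t$; the hypothesis ``$v$ beeps or hears a beep'' is exactly the event $B := \set{w^\star \text{ exists}}$ on which we condition.

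For each candidate $w \in W$, conditioning on $\set{w^\star = w}$ forces $w$ to beep and every node of $W_{<w}$ (the strictly smaller-indexed part of $W$) to listen, while leaving the beep/listen choices of all other nodes independent of this conditioning. Under this event the sub-event $A_w := \set{w \text{ beeps alone}}$ reduces to the requirement that the residual neighbours $N(w) \setminus W_{<w}$ all listen, since members of $N(w) \cap W_{<w}$ listen by construction. Because those residual beep variables are independent of the conditioning, this conditional probability factorises as $\prod_{u \in N(w) \setminus W_{<w}}(1 - b_u(t))$.

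Since $N(w) \setminus W_{<w} \subseteq N(w)$ and $E_w(t) \le \half$ by hypothesis, the beep-potential of this index set is at most $\half$. Because every beep probability used by the algorithm lies in $[0, \half]$, the inequality $1 - x \ge e^{-2x}$ applies factor by factor, giving a lower bound of $e^{-2 E_w(t)} \ge e^{-1}$. Thus $\Pr[A_w \mid w^\star = w] \ge 1/e$, and since $A_w \subseteq A$, also $\Pr[A \mid w^\star = w] \ge 1/e$ where $A$ is the event that some node of $W$ beeps alone. The law of total probability then finishes the argument: $\Pr[A \mid B] = \sum_{w \in W} \Pr[w^\star = w \mid B] \cdot \Pr[A \mid w^\star = w] \ge 1/e$.

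The main obstacle being navigated is that the events $\set{w \text{ beeps alone}}$ for different $w \in W$ are not pairwise disjoint when $N(v)$ contains non-adjacent pairs, so one cannot simply sum them up and invert a union bound. Conditioning on the smallest-indexed beeper circumvents this because the events $\set{w^\star = w}$ are genuinely disjoint, and because the conditioning depends only on the beep variables of $\set{w} \cup W_{<w}$, the remaining neighbours' beep variables stay independent and permit the clean product bound above.
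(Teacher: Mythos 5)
Your proof is correct and follows essentially the same route as the paper's: your conditioning on the smallest-indexed beeper $w^\star$ is exactly the paper's partition $\xi_i = A_i \cap \bigcap_{j<i}\neg A_j$, and both arguments bound the conditional probability of beeping alone by $\prod_u (1-b_u(t)) \ge e^{-2E} \ge 1/e$ before applying the law of total probability. Your explicit independence justification for why conditioning on lower-indexed nodes listening only helps is a slightly more careful rendering of the paper's step $\PR{B_i|\xi_i} \ge \PR{B_i|A_i}$, but the argument is the same.
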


\begin{proof}
  For simplicity we rename the set $N(v)\cup \set{v}$ to the set
  $\set{1,\ldots,k}$ where $k=|N(v)|+1$.
  For $i \in [k]$ we consider three events.
  \begin{align*}
    A_i&: \text{ Node $i$ beeps in round $t$.} \\
    B_i&: \text{ Node $i$ beeps alone in round $t$.} \\
    S &: \bigcup_{i \in [k]} B_i
  \end{align*}
  Our aim is to show that the event $S$ happens with constant
  probability. Fix $i \in [k]$, as a first step we show that 
  $\PR{B_i|A_i}$ is constant.

  \begin{align*}
  \PR{B_i|A_i} & =\PR{\overline{\bigcup_{w \in N(i)} A_w}} =
  \PR{\bigcap_{w \in N(i)} \overline{A_w}} \\&= \prod_{w \in N(i)}(1-b_w(t))
  \\
  & \ge \exp\paren{-2\sum_{w \in N(i)} b_w(t)} = e^{-2 E_i(t)}
  \end{align*}

  Moreover, since by assumption $E_i(t)\le \half$, it follows that 
  $\PR{B_i|A_i}\ge \frac{1}{e}$.

  We define the following finite partition of the probability space:
  \begin{align*}
    \xi_1 & = A_1,\\
    \xi_2 & = A_2 \cap \neg{A_1},\\
    \xi_3 & = A_3 \cap \neg{A_2} \cap \neg{A_1},\\
    \ldots &\\
    \xi_k & = A_k \cap \bigcap_{i=1}^{k-1} \neg{A_i}.
  \end{align*}
  Recall that by assumption our probability space is conditioned on the 
  event that ``node $v$ beeps or hears a beep in round $t$'', or in 
  other words $\exists i \in [k]$ such that $A_i$ has occurred.  
  Moreover, observe that $\bigcup_{i=1}^k \xi_i = \bigcup_{i=1}^k A_i$, 
  and thus $\PR{\bigcup_{i=1}^k \xi_i}=1$.

  Since the events $\xi_1,\ldots,\xi_k$ are pairwise disjoint, by the
  law of total probability we have:
  $$\PR{S} = \sum_{i=1}^k \PR{S|\xi_i}\PR{\xi_i}$$
  Finally since $\PR{S|\xi_i}\geq\PR{B_i|\xi_i} \ge \PR{B_i|A_i}
  \ge \frac{1}{e}$ then $\PR{S} \ge \frac{1}{e} \sum_{i=1}^k
  \PR{\xi_i} = \frac{1}{e}$.
\end{proof}

The three previous lemmas give us the ingredients
necessary to prove Lemma~\ref{lem:bterminate} and thus complete the proof of
Theorem~\ref{thm:simpleMIS}:

\begin{proof}[Proof of Lemma~\ref{lem:bterminate}\\]
  We say a node has an event in round $t$, if it beeps or hears a beep
  in round $t$. First we claim that a node has an
  event every $\bigO(\log^2 N)$ rounds.
  Consider a node that does not hear a beep within $\bigO(\log^2 N)$ 
  rounds (if it does hear a beep, the claim clearly holds). Then after 
  $\bigO(\log^2 N)$ it will join the MIS and beep and the claim follows.

  From Lemma~\ref{lem:lowEu} we know that when a node decides to beep, with
  high probability the beep potential of its neighborhood is less than
  $\half$. We can use a union bound to say that when a node hears a
  beep, with high probability the beep was produced by a node with a
  beep potential less than $\half$.
  Therefore, we can apply Lemma~\ref{lem:goodbeep} to say that with constant
  probability every time a node has an event, either the node joins the
  MIS (if it was not in the MIS already) or it becomes covered by an MIS
  node.

  Hence, with high probability after $\bigO(\log n)$ events, a node is
  either part of the MIS or it becomes covered by an MIS node. Since
  there is an event every $\bigO(\log^2 N)$
  rounds, this implies that with high probability a node is either
  inside the MIS or has a neighbor in the MIS after $\bigO(\log^2 N
  \log n)$ rounds.
\end{proof}

This completes the proof for Theorem~\ref{thm:simpleMIS}.


  \section{Wake-on-Beep and Sender-Side\\ Collision Detection} \label{sec:Nmis}

This section considers a different relaxation of the beeping model.  
Specifically, while still allowing the adversary to wake up
nodes arbitrarily, in this and the next section we assume that
sleeping nodes also wake up upon receiving a beep. We call this the 
wake-on-beep
assumption. Moreover, in this section we also assume that when a node beeps, it
receives some feedback from which it can infer if it beeped alone, or
if one of its neighbors beeped concurrently. We call this sender-side 
collision detection. We will show that in the wake-on-beep model
with sender-side collision detection it is possible to locally converge 
to an MIS in $O(\log^2 n)$ time, even if nodes have no knowledge about 
the network topology, including its size.

This algorithm is an improvement of the algorithm presented in
\cite{science11}, which used an upper bound on the size of the
network. In this algorithm nodes go through several iterations in
which they gradually decrease the probability of being selected.
The running time of the algorithm is still $O(\log^2 n)$ as we show
below. Compared to the algorithm in \cite{science11}, in addition to
eliminating the dependence on any topological information, the current
algorithm tolerates adversarial wake ups if we assume
wake-on-beep.

\paragraph{Algorithm.}

\begin{algorithm}[t!]
    \caption{Wake-on-beep and sender-side collision
        detection}
    \label{alg:mis1}
    \begin{algorithmic}[1]
        \State {\bf upon} waking up (by adversary or beep)
        \State \quad {\bf do}
        beep to wake up neighbors
        \State wait for 1 round; $x\leftarrow 0$ \Comment{while neighbors wake
            up}
        \Repeat
        \State $x\leftarrow x+1$ \Comment{$2^x$ is current size estimate}
        \For{$i\in \set{0,\dots,x}$} \Comment{$\log 2^x$ phases}
        \State \hspace{2ex} {\bf ** exchange 1 ** with 3 rounds}
        \State listen for 1 round; $v\leftarrow 0$ \Comment{round 1}
        \State w/probability $1/2^i$, beep and set $v\leftarrow 1$
        \Comment{round 2}
        \State listen for 1 round \Comment{round 3}
        \State {\bf if} received beep in  exchange 1 {\bf
            then} $v \leftarrow 0$
        \State \hspace{2ex} {\bf ** exchange 2 ** with 3 rounds}
        \State listen for 1 round \Comment{round 1}
        \State {\bf if} $v=1$ {\bf then} beep and join MIS
        \Comment{round 2}
        \State listen for 1 round \Comment{round 3}
        \EndFor
        \Until in MIS or received beep in exchange 2
        \State terminate
    \end{algorithmic}
\end{algorithm}

The algorithm proceeds in phases each consisting of $x$ steps where
$x$ is the total number of phases performed so far (the phase
counter). 
Step $i$ of
each phase consists of two exchanges. In the first exchange nodes beep
with probability $p_i$ (the value of $p_i$ is given by the algorithm), and in 
the second exchange a node that
beeped in the first exchange and did not hear a beep from any of its
neighbors, beeps again, signaling its neighbors it has joined the MIS
and they should become inactive and exit the algorithm.

Nodes that are woken up by the adversary propagate a wave of wake-up beeps 
throughout the network. Upon
hearing the first beep, which must be the wake up beep, a node
broadcasts the wake up beep in the next round, and then waits one
round to ensure none of its neighbors are still asleep. This
ensures that all neighbors of a node wake up either in the same round
as that node or one round before or after that node. Due to these
possible differences in wakeup time, we divide each exchange into
$3$ rounds. During the second round of the first exchange each active 
node
beeps with probability $p_i$ (the
value of $p_i$ is given in the algorithm). The second exchange also
takes three rounds. A node that beeps in the
first exchange joins the MIS if none of its neighbors beeped in any of 
the three rounds of the first exchange. Such
a node again  beeps in the second round of the second
exchange signaling its neighbors to terminate the algorithm.
The algorithm is detailed in Algorithm~\ref{alg:mis1}.


\paragraph{Safety.}
While the algorithm in \cite{science11} uses a different set of coin
flip probabilities, it relies on a similar two exchanges structure
to guarantee the safety properties of the algorithm.
In \cite{science11}, each exchange is only one round (since synchronous wakeup 
is assumed). We thus need to show that replacing each one round exchange 
with a three round exchange does not affect the MIS safety properties 
of~\cite{science11}. We start by proving that the termination lemma from 
\cite{science11}, which relies on the fact that all neighbors are using 
the same probability distribution in each exchange, still holds.

\begin{lemma}\label{lem:mis1}
All messages received by node $j$ in the first exchange of
step $i$ were sent by processes using the same probability as $j$
in that step.
\end{lemma}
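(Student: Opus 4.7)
My goal is to show that every beep heard by $j$ during the three-round first exchange of step $i$ is emitted by a neighbor that is also executing the first exchange of step $i$, and therefore uses the same probability $p_i = 1/2^i$ as $j$. The first step of the plan is to establish a tight timing bound: for any two neighbors $u$ and $j$, their wake-up times---and hence their main-loop entry times after the mandatory wait on line 3---differ by at most one round. Because a node beeps immediately upon waking (line 2) and wake-on-beep is in effect, whichever of $u,j$ wakes first forces the other to wake within a single round, which gives the claimed $\pm 1$ skew bound.

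Next I would exploit the determinism of the main loop: since $(x,i)$ depends only on elapsed rounds since the node entered the loop, any two neighbors $u,j$ that are both still in the loop occupy the same $(x,i)$ at rounds differing by at most the one-round skew. A neighbor that has already terminated (joined the MIS or exited after hearing an exchange-2 beep) emits no further beeps and can be ignored. For a neighbor still in the loop, the first-exchange random beep is emitted only in the middle round of its three-round window; thanks to the $\pm 1$ offset, $u$'s first-exchange beep for step $(x,i)$ necessarily falls inside $j$'s first-exchange window for the same step, and both use probability $1/2^i$.

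The main obstacle is to rule out that any \emph{other} beep from a neighbor could intrude into $j$'s three-round window: specifically (a) $u$'s wake-up beep on line 2, (b) a first-exchange beep from a different step, or (c) any second-exchange beep. For (a), $u$'s wake-up beep precedes $u$'s main loop by two rounds, hence lies at least one round before $j$'s main loop begins by the skew bound, and so strictly before any first-exchange window inside the loop. For (b) and (c), beeping opportunities inside the loop are separated by at least three rounds within a single step and at least six rounds across steps, so a $\pm 1$ round shift cannot push them into a window of length three. This small-integer case analysis on round offsets is the only delicate part of the argument; once completed, the lemma follows immediately.
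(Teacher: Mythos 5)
Your proposal is correct and follows essentially the same route as the paper's proof: the wake-on-beep mechanism bounds the execution skew between neighbors to one round, and since each exchange spans three rounds with the random beep only in the middle round, any beep $j$ hears during its first exchange of step $i$ must come from a neighbor's first exchange of the same step, hence from a node using probability $1/2^i$. You are somewhat more explicit than the paper in ruling out stray beeps (wake-up beeps, second-exchange beeps, adjacent steps), which the paper leaves implicit; note only that consecutive beeping opportunities across a step boundary are three rounds apart, not six, but the three-round separation you also invoke already suffices against a $\pm 1$ shift.
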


\begin{proof} Let $k$ be a neighbor of
$j$. If $k$ started in the same round as $j$ (both woke up at the
same round) then they are fully synchronized and we are done. If $k$
started before $j$ then the first message $k$ sent has awakened $j$.
Thus, they are only one round apart in terms of execution. Any
message sent by $k$ in the second round of the first exchange of step
$i$ would be received by $j$ in the first round of that exchange.
Similarly, if $k$ was awakened after $j$ it must have been a $1$
round difference and $j$ would receive $k$'s message of the first
exchange of step $i$ (if $k$ decided to beep) in the third
round of that exchange. Thus, all messages received by $j$ are from
processes that are also in step $i$ and so all processes from
which $j$ receives messages in that exchange are using the same
probability distribution.
\end{proof}

A similar argument would show that all messages received
in the second exchange of step $i$ are from processes that are in
the second exchange of that step. Since our safety 
proof~\cite{science11} only relies
on the coherence of the exchange it still holds for this algorithm.

\paragraph{Termination.}
After establishing the safety guarantees, we next prove that with
high probability all nodes terminate the algorithm in $O(\log^2 n)$
time where $n$ is the number of nodes that participate in the
algorithm. Let $d_v$ be the number of {\em active} neighbors of node
$v$. We start with the following definition of \cite{alon86}. A node
$v$ is \emph{good} if it has at least $d_v/3$ active neighbors
$u$, such that, $d_u \leq d_v$. An edge is \emph{good} if at least one 
of its endpoints is a \emph{good} node. 

\begin{lemma}[Lemma 4.4 from \cite{alon86}]
  \label{lem:goodedges}
  In every graph $G$ at least half of the edges are good. Thus, $ 
  \sum_{v \in good} d_v \geq |E| / 2 $.
\end{lemma}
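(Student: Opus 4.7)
The plan is to use an edge-orientation argument. Orient every edge $\{u,v\}$ from its lower-degree endpoint to its higher-degree endpoint, breaking ties in some arbitrary fixed way. Under this orientation, for each vertex $v$ let $d^-(v)$ and $d^+(v)$ denote its in- and out-degree, so $d^-(v)+d^+(v)=d_v$. By definition $v$ is good iff it has at least $d_v/3$ neighbors of degree $\le d_v$, which is exactly $d^-(v)\ge d_v/3$. Equivalently, $v$ is bad iff $d^-(v)<d_v/3$, which, using $d^-(v)+d^+(v)=d_v$, rearranges to $d^+(v)>2d^-(v)$. This slack factor of $2$ between out-edges and in-edges at bad vertices is the whole engine of the proof.

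Next I would split the vertex set into good vertices $G$ and bad vertices $B$, so that bad edges are precisely those with both endpoints in $B$. I classify the edges incident to $B$ by orientation. Writing $E^{+}_{BG}$ for edges oriented from $B$ to $G$ (they have a good head, so they are good) and $E^{-}_{BG}$ for edges oriented from $G$ to $B$ (also good), and $E_B$ for bad edges, each bad edge contributes once to the out-degree sum over $B$ (at its tail) and once to the in-degree sum (at its head). Therefore
\[
\sum_{v\in B} d^+(v) \;=\; |E^{+}_{BG}| + |E_B|, \qquad
\sum_{v\in B} d^-(v) \;=\; |E^{-}_{BG}| + |E_B|.
\]

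Summing the per-vertex inequality $d^+(v)>2d^-(v)$ over $v\in B$ and substituting gives $|E^{+}_{BG}|+|E_B| > 2|E^{-}_{BG}|+2|E_B|$, hence $|E^{+}_{BG}| > |E_B|+2|E^{-}_{BG}| \ge |E_B|$. Since $E^{+}_{BG}$ consists entirely of good edges, the set of good edges $E_{\mathrm{good}}$ satisfies $|E_{\mathrm{good}}|\ge |E^{+}_{BG}| > |E_B|$. Combined with $|E|=|E_{\mathrm{good}}|+|E_B|$, this yields $|E_{\mathrm{good}}|>|E|/2$, which is the first half of the lemma. The second half follows because every good edge has at least one good endpoint, so $\sum_{v\in G} d_v \ge |E_{\mathrm{good}}| \ge |E|/2$.

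The only delicate step is setting up the bookkeeping of edges incident to $B$ by orientation; once the identity $\sum_{v\in B}d^+(v)-\sum_{v\in B}d^-(v)=|E^{+}_{BG}|-|E^{-}_{BG}|$ is in hand, the bad-vertex slack $d^+(v)>2d^-(v)$ does all the work. I expect the main obstacle to be avoiding double-counting when an edge lies inside $B$ versus across the $B$--$G$ cut, which the explicit decomposition into $E^{+}_{BG}$, $E^{-}_{BG}$, and $E_B$ handles cleanly.
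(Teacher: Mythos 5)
Your proof is correct; note that the paper does not prove this lemma itself but imports it as Lemma~4.4 of \cite{alon86}, and your orientation argument (direct each edge toward its higher-degree endpoint, observe $d^+(v)>2d^-(v)$ at bad vertices, and conclude that the edges leaving the bad set outnumber the edges inside it) is essentially the standard proof from that reference. One small imprecision: goodness is not \emph{equivalent} to $d^-(v)\ge d_v/3$, since a neighbor of equal degree may be oriented away from $v$ by the tie-break; only the implication that a bad vertex has $d^-(v)<d_v/3$ holds, but that is the only direction your argument uses (and you should dispose of the trivial case where the bad set is empty before summing the strict inequalities), so the proof stands.
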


Note that we need less than $O(\log^2 n)$ steps to reach $x \geq \log
n$, since each phase $x \le \log n$ has less than $\log n$ steps.
When $x \geq \log n$, the first $\log n$ steps in each phase are
using the probabilities: $1,1/2,1/4, ... , 2/n, 1/n$. Below we show
that from round $x = \log n$, we need at most $O(\log n)$ more
phases to guarantee that all processes terminate with high probability.
We say an edge is \emph{deleted} if one of its endpoints joins the MIS.

\begin{lemma}
  In a phase (with more than $\log n$ steps) in expectation a constant 
  fraction of the edges are deleted.
\end{lemma}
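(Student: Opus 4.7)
The plan is to adapt the classical Luby-style argument to the beeping setting: I will show that for every good node $v$ with initial degree $d_v$, the probability that $v$ is deleted during the phase (meaning $v$ or one of its neighbors joins the MIS) is bounded below by an absolute constant. Combined with Lemma~\ref{lem:goodedges}, this gives the claim, because whenever a good node $v$ is deleted all $d_v$ of its edges disappear, so the expected number of deleted edges is at least $\tfrac12\sum_{v\text{ good}} d_v\cdot\Pr[v\text{ deleted}]\ge \tfrac{c}{4}|E|$ by Lemma~\ref{lem:goodedges}.

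First I would fix a good node $v$ and identify the unique step $i_v$ of the phase whose beep probability $p_{i_v}=1/2^{i_v}$ lies in the interval $(1/(2d_v),\,1/d_v]$. Since the phase contains steps with probabilities $1,1/2,1/4,\ldots,1/n$ and $d_v\le n$, such a step exists. The key event is that at step $i_v$ some low-degree neighbor $u\in L_v=\set{u\in N(v):d_u\le d_v}$ \emph{beeps alone}; if this happens, $u$ joins the MIS and $v$ becomes dominated. For any fixed $u\in L_v$,
\[
\Pr[u\text{ beeps alone at step }i_v]=p_{i_v}(1-p_{i_v})^{d_u}\;\ge\;p_{i_v}(1-p_{i_v})^{d_v}\;\ge\;\frac{p_{i_v}}{e},
\]
using $(1-1/d_v)^{d_v}\ge 1/e$. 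Since $|L_v|\ge d_v/3$ (because $v$ is good) and $p_{i_v}\ge 1/(2d_v)$, the expected number of $u\in L_v$ that beep alone at step $i_v$ is $\Omega(1)$. A Paley--Zygmund second-moment estimate --- whose variance is tamed because pairwise probabilities are at most $p_{i_v}^2$ and $\binom{d_v}{2}p_{i_v}^2\le 1/2$ --- then shows that this expectation is attained with at least an absolute constant probability, giving the desired constant lower bound on $\Pr[v\text{ deleted}]$.

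The main obstacle I anticipate is that the step-$i_v$ analysis implicitly assumes the graph is unchanged from the start of the phase, whereas some low-degree neighbors of $v$ may already have been removed in earlier steps. If such a neighbor was removed because it joined the MIS, then $v$ was already dominated and there is nothing to prove; the delicate sub-case is when many $u\in L_v$ were \emph{dominated} by MIS nodes at graph-distance two from $v$ while $v$ itself remained active, which shrinks the effective $L_v$ available at step $i_v$ and makes the direct calculation above too weak. I would handle this by a charging argument: every earlier domination of some $u\in L_v$ already destroyed $\Omega(d_v)$ edges incident to $u$ (since $d_u\le d_v$), and those deletions can be credited directly to the global edge-deletion count. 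With careful accounting that avoids double-counting the edges between $v$, its low-degree neighbors and the distant MIS nodes, the constant-fraction bound on deleted edges is preserved.
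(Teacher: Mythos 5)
Your step-$i_v$ calculation is fine and essentially matches the paper's, but the reduction you build it into has a genuine gap: it is \emph{not} true that every good node $v$ is deleted (i.e.\ $v$ or a neighbor joins the MIS) during the phase with probability bounded below by an absolute constant. Consider a phase at whose start the active graph is: $v$ adjacent to $D$ nodes $u_1,\ldots,u_D$, each $u_i$ adjacent to $v$ and to $D-1$ private leaves of degree one. Every neighbor of $v$ has degree exactly $d_v=D$, so $v$ is good. In each early step a still-active $u_i$ is silent with probability $1-2^{-s}$, and then with overwhelming probability one of its $D-1$ leaves beeps alone and eliminates it; so with probability $1-o(1)$ all $u_i$ are dominated by their leaves within $\bigO(\sqrt{\log D})$ steps. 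Before being eliminated, a $u_i$ can join the MIS only if all of its $\approx D$ leaves are simultaneously silent at such an early step, which (even after a union bound over $i$) happens with probability $o(1)$; and once the $u_i$ are gone, the beep probabilities remaining in the phase sum to $2^{-\Omega(\sqrt{\log D})}$, while in the early steps $v$ is almost never alone when it beeps because many $u_i$ are still active. Hence $\Pr[v\text{ is deleted in this phase}]=2^{-\Omega(\sqrt{\log D})}=o(1)$, and the chain $\mathrm{E}[\text{deleted edges}]\ge\frac12\sum_{v\text{ good}}d_v\Pr[v\text{ deleted}]$ from your first paragraph cannot deliver $\Omega(|E|)$. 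The paper avoids exactly this by bounding a different quantity: the expected number of edges \emph{incident to $v$} removed during the phase, via the dichotomy that either at least $d_v/20$ of $v$'s incident edges are already gone by the step with probability $\approx 1/d_v$ (every eliminated neighbor $u$ removes the edge $\set{u,v}$, which is incident to $v$ and counts), or else more than $d_v/4$ low-degree neighbors are still active at that step and an estimate like yours gives a constant probability that $v$ or a neighbor joins, removing all $d_v$ incident edges. In the example above it is the first branch that fires, which is precisely what a per-node deletion probability misses.

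Your anticipated repair does not close the gap either. Its key claim---that each earlier domination of some $u\in L_v$ already destroyed $\Omega(d_v)$ edges incident to $u$ ``since $d_u\le d_v$''---has the inequality pointing the wrong way: $d_u\le d_v$ only \emph{upper}-bounds the number of edges incident to $u$, and a dominated low-degree neighbor may have just $\bigO(1)$ incident edges, so there is nothing of order $d_v$ to charge; moreover, crediting edges far from $v$ to the ``global edge-deletion count'' needs exactly the multiplicity accounting you leave unspecified. The correct fix is much more local: charge each dominated $u\in L_v$ only the single edge $\set{u,v}$. That immediately yields the paper's dichotomy, after which your choice of $i_v$, your first/second-moment argument that some surviving low-degree neighbor beeps alone (computed with respect to active degrees, which are at most $d_u\le d_v$, so your bound still applies), and Lemma~\ref{lem:goodedges} with a factor $2$ for double counting complete the proof along essentially the paper's lines.
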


\begin{proof}
  Fix a phase $j$, and fix a good node $v$. We claim that the expected 
  number of edges incident to $v$ that are deleted
in phase $j$ is $\Omega(d_v)$. To prove the claim assume that at the 
beginning of
phase $j$,  $2^k \leq d_v \leq 2^{k+1}$ for some $0 < k < \log n$. If
when we reach step $i=k$ in phase $j$ at least $d_v /20$ edges
incident to $v$ were already removed we are done. Otherwise, at
step $i$ there are still at least $d_v/3 - d_v /20 > d_v/4 \geq
2^{k-2}$ neighbors $u$ of $v$ with  $d_u \leq d_v$. Let $A$ be the event that 
node $v$ or a neighbor $u$ with $d_u < d_v$ beeps. Node $v$ and all
its neighbors $u$ are flipping coins with probability $\frac{1}{2^k}
$ at this step and thus the probability of $A$ occurring is:
\[
    \Pr(A) \geq
    1-\left(1-\frac{1}{2^k}\right)^{2^{k-2}} \ge 1-e^{-1/4}.
\]
On the other hand, all such nodes $u$, and $v$, have less than
$2^{k+1}$ neighbors.  Thus, the probability that a node from this
set that beeps does not collide with any other node
is:
\[
\Pr(\text{no collisions}) \geq (1 - \frac{1}{2^k})^{2^{k+1}}
\ge 1/e^4.
\]
Thus, in phase $j$ a node $v$ has probability of at least
$(1-\frac{1}{e^{1/4}})\frac{1}{e^4} \geq \frac{1}{2^8} $ to be
removed. Thus, the probability that $v$ is removed in phase $j$ is 
$\Omega (1)$ and hence the expected number of edges incident with $v$ 
removed during this phase is $\Omega (d_v)$, which completes our claim.

Combining the previous claim with Lemma~\ref{lem:goodedges}, then we can 
use linearity of expectation to show that the expected number of edges 
deleted in each phase is at least $\Omega ( \sum_{v\in good} d_v) = 
\Omega( | E|)$.
\end{proof}

With this lemma in place, we are ready to prove the main theorem of this 
section.

\begin{theorem}
  Using sender-side collision detection and wake-on-beep, Algorithm 2 
  locally converges to an MIS in $O(\log^2 n)$ rounds.
\end{theorem}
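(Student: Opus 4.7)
The plan is to combine the round accounting for the phase structure with the preceding edge-deletion lemma, using iterated expectation and a single application of Markov's inequality to move from an expected per-phase progress to high-probability local convergence.

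\emph{Round accounting.} Each step costs two three-round exchanges, i.e.\ $6$ rounds, and phase $x$ executes the inner loop for $i \in \{0,\ldots,x\}$, so phase $x$ costs $6(x+1) = O(x)$ rounds. Summing, the cumulative cost through phase $K$ is $\sum_{x=1}^{K} 6(x+1) = O(K^2)$. Hence within a budget of $O(\log^2 n)$ rounds the per-node phase counter first reaches $x = \log n$ and then advances through $\Omega(\log n)$ further phases, which are exactly the phases to which the preceding edge-deletion lemma applies.

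\emph{From expected progress to high probability.} Let $m_j$ be the number of edges of the subgraph induced by the still-active nodes at the start of the $j$-th phase with $x \ge \log n$, so $m_0 \le \binom{n}{2}$. The preceding lemma supplies a constant $c > 0$ with $E[m_{j+1} \mid m_j] \le (1-c)\, m_j$. Iterating expectations gives $E[m_K] \le (1-c)^K m_0 \le (1-c)^K n^2$, so by choosing $K = c' \log n$ with $c'$ large enough we can make $E[m_K] \le n^{-\alpha}$ for any desired constant $\alpha$. Since $m_K$ is a non-negative integer, Markov's inequality yields $\Pr[m_K \ge 1] \le n^{-\alpha}$, i.e.\ no active edge remains with high probability within the $O(\log^2 n)$-round budget.

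\emph{From no active edges to local convergence.} Once a node $v$ has no active neighbor, in the next step with $i = 0$ it beeps with probability $1$, hears no competing beep in the first exchange, and therefore joins the MIS and broadcasts in the second exchange; any neighbor that is still asleep is woken by this beep (wake-on-beep) and immediately becomes inactive upon hearing an MIS announcement. Either outcome is an irrevocable MIS decision for $v$, so every node has locally converged within $O(\log^2 n)$ rounds of its own wakeup.

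The main obstacle is the concentration step, since the edge-deletion lemma controls only an expectation; the cleanest resolution is to bypass Chernoff or martingale machinery altogether by iterating the conditional expectation and appealing to integrality of $m_K$, so that $E[m_K] = o(1)$ already forces $m_K = 0$ w.h.p. A minor subtlety is that per-node phase counters can be desynchronized by up to one round because of adversarial wakeups, but the three-round exchange structure together with the wake-up propagation wave (already invoked in Lemma~\ref{lem:mis1}) guarantees that neighbors use matching beep probabilities, so the edge-deletion lemma applies unchanged to the global active subgraph.
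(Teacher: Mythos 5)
Your proposal is correct and follows the same overall skeleton as the paper (round accounting for the phase/step hierarchy, plus the preceding edge-deletion lemma applied to the phases with $x \ge \log n$), but your concentration step is genuinely different. The paper first converts the expected constant-fraction progress into a constant-probability event: since the number of edges deleted in a phase never exceeds the number remaining, a reverse-Markov argument yields constants $p,c>0$ such that a phase deletes at least a $c$-fraction of the remaining edges with probability at least $p$; calling such phases successful, it then applies a binomial/Chernoff estimate to argue that among $O(\log n)$ phases there are, with high probability, enough successful ones to perform the $O(\log |E|)$ required reductions. You instead keep everything in expectation: from $E[m_{j+1}\mid \mathcal{F}_j]\le (1-c)m_j$ (the conditioning should be on the whole active configuration at the start of the phase, not merely the value $m_j$, but the lemma does hold uniformly over configurations, so the tower-property iteration is valid) you get $E[m_K]\le (1-c)^K n^2$, and Markov's inequality together with integrality of $m_K$ finishes the job. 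This is the classical Luby-style geometric-decay argument; it is more elementary (no binomial tail estimate needed) and yields the same $O(\log^2 n)$ bound. Both arguments rely on the same glossed-over synchronization issue (per-node phase offsets of one round, handled by the three-round exchanges as in Lemma~\ref{lem:mis1}), which you flag explicitly, and your closing observation that an active node with no active neighbors joins deterministically at the $i=0$ step of its next phase (beep probability $1/2^0=1$) is a correct way to pass from ``no active edges'' to local convergence.
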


\begin{proof}
Note that since the number of edges removed in a phase in a graph
$(V,E)$ is clearly always at most $|E|$, the last lemma implies that
for any given history, with probability at least $\Omega(1)$, the
number of edges removed in a phase is at least a constant fraction
of the number of edges that have not been deleted yet. Therefore
there are two positive constants $p$ and $c$, so that the probability 
that in a phase at least a fraction
$c$ of the number of remaining edges are deleted is at least $p$.
Call a phase successful if at least a fraction $c$ of the remaining
edges are deleted during the phase.

By the above reasoning, the probability of having at least $z$
successful phases among $m$ phases is at least the probability that
a binomial random variable with parameters $m$ and $p$ is at least
$z$. By the standard estimates for binomial distributions, and by
the obvious fact that $O( \log |E|/c)=O(\log n)$, starting from $x =
\log n$ we need an additional $O(\log n)$ phases to finish the
algorithm. Since each of these additional $O(\log n)$ phases
consists of $O(\log n)$ steps, and since as discussed above until $x
= \log n$ we have less than $O(\log^2 n)$ steps, the total running
time of the algorithm is $O(\log^2 n)$.
\end{proof}

%
%
%


\section{Wake-on-Beep Without Sender-Side\\ Collision Detection}
\label{sec:beepmis}

In the previous section we assumed that nodes are endowed with 
sender-side collision detection and can thus tell whether one of their 
neighbors beeped even in rounds in which they beep. In this section we 
remove this assumption and present an algorithm for the wake-on-beep 
model that locally converges to an MIS without using sender-side 
collision detection.

\paragraph{Algorithm.}

To extend Algorithm~\ref{alg:mis1} to a model with no collision detection
we increase the number of exchanges in each step from 2 to $c x$ where 
$c$ is a constant derived below and $x$
is the same as in Algorithm~\ref{alg:mis1} and represents the current 
estimate of the network size. Each series of $c x$ rounds simulates with 
high probability an exchange with sender-side collision detection.
Prior to starting the exchanges in each step each {\em active} process 
flips a coin with the same probability as in Algorithm~\ref{alg:mis1}.  
If the flip outcome is $0$ (tails) the process only listens in the next 
$cx$ exchanges (for a constant $c$ discussed below). If the flip outcome 
is $1$ the process sets $v=1$ and picks each entry in the vector $X$ of 
length $cx$ to be $1$ or $0$ independently and uniformly at random.  
Following this, the process picks one entry in the vector $X$ 
independently and uniformly at random and sets it to $1$ (this is only 
to guarantee that at least one entry in $X$ is equal to one).
In exchange $j$ of every phase, a process beeps if $X(j)=1$ and listens 
if $X(j)=0$.
If at any of the exchanges it listens and hears a beep it sets $v=0$ and 
stops beeping (even in the selected exchanges). If a node hears a beep 
during these exchanges it does not exit the algorithm.  Instead, it 
denotes the fact that one of its neighbors beeped and sets itself to be 
inactive. If it does not hear a beep in any of the exchanges of the 
following phase it becomes active and continues as described above.  
Similarly, a node that beeped and did not hear any beep in a specific 
step (indicating that it can join the MIS) continues to beep 
indefinitely (by selecting half the exchanges in all future steps to 
beep in them).


\algrenewcommand\algorithmicloop{\textbf{repeat forever}}
\begin{algorithm}[t]
    \caption{Wake-on-beep \emph{without} sender-side collision
        detection}
    \label{alg:mis2}
    \begin{algorithmic}[1]
        \State {\bf upon} waking up (by adversary or beep)
        \State \quad{\bf do} beep to wake up neighbors
        \State wait for 1 round;  \Comment{while neighbors wake up}
        \State $x\leftarrow 0$; $v\leftarrow 0$; $z\leftarrow 0$
        \Loop
        \State $x\leftarrow x+1$ 
        \For{$i\in \set{0,\dots,x}$} 
        \State {\bf if} $v=0 \land z=0$ {\bf then} $v\leftarrow 1$ w/probability $1/2^i$
        \State $X \leftarrow$ random $0/1$-vector of length $cx$
        \State $z\leftarrow 0$
        \State \hspace{2ex} {\bf ** \boldmath$cx$ competition
            exchanges **}
        \For{$k\in\set{1,\dots,cx}$}
        \State listen for 1 round
        \State {\bf if} beep received {\bf then} $v\leftarrow0$; $z\leftarrow 
        1$ 
        \If{$v=0\lor X[k]=0$}
        \State listen for 1 round;
        \State {\bf if} beep received {\bf then} $v\leftarrow 0$; $z\leftarrow 1$
        \Else 
        \State beep for 1 round
        \EndIf
        \State listen for 1 round
        \State {\bf if} beep received {\bf then} $v\leftarrow 0$
        \EndFor
        \EndFor
        \EndLoop
    \end{algorithmic}
\end{algorithm}

We say a process $u$ is \emph{in conflict with a neighbor} $v$ if both 
have $v=1$. We say a process $u$ is \emph{in conflict} if it is in 
conflict with respect to any of its neighbors. 

The main difference between this algorithm and Algorithm~\ref{alg:mis1} 
is the addition of a set of competition exchanges at the
end of each coin flip. The number of competition exchanges is proportional
to the current phase counter (which serves as the current estimate
of the network size). Initially the competition rounds are short and
so they would not necessarily remove all conflicts. We require that
nodes that attempt to join continue to participate in all future
competition rounds (when $v=1$). Processes that detect a MIS member
as a neighbor set $z$ to 1 and do not beep until they go
through one complete set of competition exchanges in which they do
not hear any beep. If and when this happens they set $z = 0$ and
become potential MIS candidates again.

While not all conflicts will be resolved at the early phases, when
$x \geq \log n$ each set of competition exchanges is very likely to
remove all conflicts. We prove below that once we arrive at such
$x$ values, all conflicts are resolved with very high probability
such that only one process in a set of conflicting processes
remains with $v=1$ at the end of these competition exchanges. From
there, it takes another $O(\log n)$ phases to select all members of
the MIS as we have shown for Algorithm 1. Since each such phase
takes $O(\log n)$ steps with each step taking $O(\log n)$ rounds
for the competition,  the total running time of the algorithm is
$O(\log^3 n)$.

\begin{lemma}
  Assume process $y$ is in conflict at step $i$ of phase $x \geq \log 
  n$.  The probability that $y$ remains in conflict at the end of the 
  $cx$ competition exchanges for step $i$ is at most 
  $\frac{1}{n^{c/3}}$.
\end{lemma}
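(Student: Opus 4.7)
The plan is to reduce the multi-neighbor event ``$y$ remains in conflict'' to a union bound over individual pairs $(y,w)$ where $w$ is a neighbor of $y$ with $v_w=1$ at the start of step $i$, and then show that each pairwise conflict is resolved with very high probability by the $cx$ competition exchanges. Fix such a neighbor $w$; since $y$ is assumed to be in conflict, at least one such $w$ exists.

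The key observation is that in any competition exchange $k$, if $y$ and $w$ still both have $v=1$ upon entering the exchange and $X_y[k]\neq X_w[k]$, then exactly one of them beeps in the middle round while the other listens. The three-round structure of each exchange was designed precisely to absorb the possible one-round wake-up offset between neighbors: a quick case analysis on whether $w$'s local clock is $0$ or $\pm 1$ rounds offset from $y$'s shows that the listener is always in a listening round when the beeper emits its beep, and so sets $v\leftarrow 0$. This resolves the pairwise conflict between $y$ and $w$. Consequently, the only way the conflict between the specific pair $(y,w)$ survives all $cx$ exchanges is if $X_y[k]=X_w[k]$ for every $k\in\{1,\dots,cx\}$.

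Since $X_y$ and $X_w$ are drawn independently and uniformly from $\{0,1\}^{cx}$, the probability that they agree in every coordinate is at most $2^{-cx}$. Using $x\geq \log n$, this is at most $2^{-c\log n}=n^{-c}$. A union bound over the at most $n$ neighbors of $y$ then gives
\[
\Pr[y \text{ remains in conflict}] \;\leq\; n\cdot n^{-c} \;=\; n^{1-c},
\]
which is at most $n^{-c/3}$ whenever $c\geq 3/2$, yielding the claimed bound (with plenty of slack for the constant $c$ chosen by the algorithm).

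The main obstacle is the case analysis for the three-round exchange under the $\pm 1$ round wake-up offset: one has to check that a bit disagreement $X_y[k]\neq X_w[k]$ \emph{always} produces a beep that the listener's three consecutive listening opportunities intercept, even when both $y$ and $w$ also interact with other neighbors that may independently be beeping. A minor secondary point is any slight dependence between the $X_y[k]$'s induced by the ``force one random entry to $1$'' rule mentioned in the text; this only shifts the per-exchange matching probability by an additive $O(1/(cx))$ and is easily absorbed into the constant, since replacing $1/2$ with $1/2 + o(1)$ per exchange still gives a probability at most $(3/4)^{cx}\leq n^{-c/3}$ for sufficiently large $c$.
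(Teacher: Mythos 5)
Your proof is correct, but it takes a genuinely different route from the paper's. The paper argues directly about $y$: as long as some conflicting neighbor still has $v=1$, in each exchange the event that $y$ listens while that neighbor beeps has probability at least $1/4$, and this event silences $y$ itself; hence the conflict survives one exchange with probability at most $3/4$ and all $cx\ge c\log n$ exchanges with probability at most $(3/4)^{c\log n}\le n^{-c/3}$. You instead fix a conflicting pair $(y,w)$, observe that this pair can stay in conflict only if $X_y$ and $X_w$ agree in every coordinate (any disagreement while both still hold $v=1$ makes exactly one of them beep and the other hear it and reset $v$, and $v$ is never reset to $1$ within a step), bound the agreement probability by roughly $2^{-cx}\le n^{-c}$, and union bound over the at most $n$ conflicting neighbors. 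Your decomposition buys a sharper per-pair exponent at the price of the extra factor $n$, so your final inequality needs $c\ge 3/2$; this is harmless since the later union bound over nodes and steps needs a large constant $c$ anyway. Two points you flag should be pinned down rather than asserted: first, the claim that a disagreeing exchange is always detected despite the one-round wake-up offset is exactly the three-round coherence argument of Lemma~\ref{lem:mis1} (the listener listens in all three rounds of its exchange, and the beeper's middle-round beep falls in that window for offsets $0,\pm 1$), and you should invoke it explicitly --- the paper's own per-exchange step rests on it too; second, the dependence created by forcing one random entry of $X$ to $1$ is handled more cleanly by conditioning on the two forced positions, after which the remaining $cx-2$ coordinates are independent fair bits and the agreement probability is at most $4\cdot 2^{-cx}$, avoiding your informal per-coordinate ``$1/2+o(1)$'' bound, which as stated multiplies bounds across coordinates that are not independent.
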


\begin{proof} If at any of the
exchanges in this step all neighbors of $y$ have $v=0$ we are done.
Otherwise in each exchange, with probability at least $1/4$, $y$
decided not to beep whereas one of its conflicting neighbors
decided to beep. Thus, the probability that $y$ remains in conflict
in a specific exchange is at most $3/4$. Since
there are $(c \log n)$ exchanges in this step, the probability that
$y$ is in conflict at the end of these exchanges
is at most $(\frac{3}{4})^{c\log n} \leq \frac{1}{n^{c/3}}$.
\end{proof}

Note that if two nodes remain in conflict after an exchange, they 
continue to beep in the following phase.
As we proved in the previous section, if all conflicts are resolved in
the $O(\log n)$ phases that follow the phase $x = \log n$ the
algorithm will result in a MIS set with very high probability. Since
we only need $O( \log^2 n)<n$ steps for this, and we have $n$ nodes,
the probability that there exists a step and a node in phase $x \geq
\log n$ such that a node that conflicted during this step with a
neighbor does not resolve this conflict in that step is smaller
than $\frac{1}{n^{c/3-2}}$. Thus, with probability at least $1 - 
\frac{1}{n^{c/3 -2}}$  all conflicts are resolved and the MIS
safety condition holds.

We note that the fact that the vector $X$ always contains at least one $1$
guarantees that once an MIS is computed it remains stable forever. We also 
remark that in contrast to the algorithm in Section~\ref{sec:Nmis},
it is not possible for the algorithm in this section to terminate safely at any 
point of time (see Section~\ref{sec:termination} for details).
This discussion completes the main proof of our main theorem for this section:

\begin{theorem}
In the wake-on-beep model, Algorithm~\ref{alg:mis2} locally converges to 
an MIS in $O(\log^3 n)$ rounds.
\end{theorem}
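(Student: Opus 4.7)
The plan is to piece together the safety and termination arguments already sketched in the section. For safety, I would argue that two neighboring nodes $u$ and $v$ can both settle into the MIS only if they both reach the state $v=1$ in the same step and neither hears the other across \emph{all} $cx$ competition exchanges of that step. Since the $0/1$-vectors $X_u$ and $X_v$ are independent and each coordinate is uniform, the probability that their beep patterns never collide in any single exchange of the $cx$-exchange block is at most $(3/4)^{cx}$ (the chance that in each exchange neither hears the other). Once this is combined with the forced-$1$ coordinate in each $X$ vector, which guarantees that any two conflicting neighbors that both try to stay in the MIS will almost surely detect each other in some subsequent phase, safety follows as it did in the previous lemma, by union bound over at most $n$ nodes and the at most $O(\log^2 n)$ steps until $x\geq \log n$: the failure probability is at most $n^{2-c/3}$, which is $o(1)$ for $c$ sufficiently large.

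For termination, I would formalize the intuition already given: once $x\geq \log n$, a single $cx$-exchange competition round reliably simulates the sender-side collision detection of Algorithm~\ref{alg:mis1}. Concretely, I would prove that if node $y$ enters a step of phase $x\geq \log n$ with $v=1$ and has any neighbor $u$ also with $v=1$, then with probability at least $1-n^{-c/3}$ exactly one of the two sets $v=1$ within any such conflicting clique of $y$'s neighborhood survives the $cx$ exchanges, and moreover that any node $w$ that would have heard a beep from a neighbor in the sender-side collision-detection model of Section~\ref{sec:Nmis} hears a beep in at least one of the $cx$ exchanges with the same probability bound. A union bound over the $O(\log^2 n)$ steps and $n$ nodes again shows that with high probability every such step behaves identically to a step of Algorithm~\ref{alg:mis1}.

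Given this coupling, I would invoke the termination analysis of Section~\ref{sec:Nmis} verbatim. Recall that reaching $x=\log n$ takes only $O(\log^2 n)$ steps (the prefix sum of phase lengths), and that from $x=\log n$ on each phase contains the step probabilities $1,1/2,\dots,1/n$ that Lemma~\ref{lem:goodedges} and the good-edge argument use. By the already-established fact that in each such phase a constant fraction of the remaining edges is deleted in expectation with constant probability, after $O(\log n)$ further phases all edges are deleted, i.e.\ every node is either in the MIS or has a MIS neighbor. Multiplying out the costs gives $O(\log n)$ phases $\times\ O(\log n)$ steps per phase $\times\ O(\log n)$ rounds per step $=O(\log^3 n)$ rounds, dominating the $O(\log^2 n)$ burn-in to $x=\log n$.

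The main obstacle is the coupling argument in the termination step: it is not enough to say that each exchange of Algorithm~\ref{alg:mis1} is replaced by a $cx$-exchange block; one must verify that the \emph{joint} distribution of ``who joins the MIS'' across all nodes in a step is, with high probability, identical to what Algorithm~\ref{alg:mis1} would have produced in the same step. The delicate point is handling nodes with $z=1$ (temporarily suppressed by a neighbor's beep) so that they do not interfere with the good-edge accounting. I would handle this by observing that a node with $z=1$ has a confirmed MIS neighbor from the previous phase, hence its incident edges are already ``deleted'' in the sense of the proof, so excluding it from the counting only tightens the bound rather than weakening it. Once this bookkeeping is in place, the rest is a direct transfer of the Section~\ref{sec:Nmis} analysis.
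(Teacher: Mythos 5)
Your proposal follows essentially the same route as the paper: the paper's proof consists of exactly your two ingredients, namely a lemma showing that a node in conflict at a step of a phase $x \ge \log n$ stays in conflict after the $cx$ competition exchanges with probability at most $(3/4)^{cx} \le n^{-c/3}$ (so a union bound over the $n$ nodes and the $O(\log^2 n) < n$ relevant steps makes every block simulate sender-side collision detection with high probability), followed by a transfer of the Section~\ref{sec:Nmis} good-edge analysis giving $O(\log n)$ phases of $O(\log n)$ steps of $O(\log n)$ rounds each, i.e.\ $O(\log^3 n)$ in total. Your discussion of the coupling is in fact more detailed than the paper's (which asserts the transfer without mentioning $z=1$ nodes); the only point to soften is the claim that a node with $z=1$ has a confirmed MIS neighbor --- the beeper it heard may itself lose $v=1$ later in the block, which is exactly why the algorithm lets such nodes re-activate after a quiet phase --- but this side remark does not affect the structure or validity of the argument.
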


  \section{Synchronized Clocks}\label{sec:synch}

For this section the only assumption we make on top of the beeping model is 
that
that nodes have synchronized clocks, that is, know the current round number $t$. 

The idea of the algorithm is to simulate Luby's permutation 
algorithm~\cite{luby86}.  In Luby's permutation algorithm a node picks a 
random $\bigO(\log n)$-size priority which it shares with its neighbors.  
A node then joins the MIS if it has the highest priority among its 
neighbors, and all neighbors of an MIS node become inactive.
Despite the fact that we describe the algorithm for
the message passing model, it is straightforward to adapt the priority 
comparisons
to the beeping model. For this, a node sends its priority bit by
bit, starting with the highest-order bit and using a \beep for a $1$. The
only further modification is that a node stops sending its priority as soon as 
it hears a \beep on a higher order bit during which it
remained silent because it had a zero in the corresponding bit. Using this
simple procedure, a node can easily realize when a neighboring node has
a higher priority.
Furthermore, nodes which do not hear any \beep correspond to the nodes 
which have the highest-priority in its
neighborhood (strictly speaking, this correspondence is not exact, since 
the algorithm described allows even more nodes to join the MIS that one 
step of Luby, but without violating any safety guarantees).

Therefore, as long as nodes have a synchronous start and know $n$ (or an upper 
bound on $n$) it is straightforward to get Luby's permutation algorithm working 
in the beeping model in $\bigO(\log^2 n)$ rounds.

In the rest of this section we show how to remove the need for an upper
bound on $n$ and a synchronous start. We leverage synchronized
clocks to synchronize the exchanges of priorities amongst neighboring nodes.
Our algorithm keeps an estimate $k$ for the required priority-size 
$\bigO(\log n)$. Whenever two nodes tie for the highest priority the 
algorithm concludes that $k$ is not large enough and doubles its 
estimate.
The algorithm uses a Restart-Bit to ensure that nodes locally work with the 
same estimate $k$ and run in a synchronized manner in which priority 
comparisons start at the same time (namely every $t \equiv 0 \pmod k$). It is 
not obvious that either a similar $k$ or a synchronized priority comparison is 
necessary but it turns out that algorithms without them can stall for a long 
time. In the first case this is because nodes with a too small $k$ repeatedly 
enter the MIS state simultaneously, while in the second case many 
asynchronously competing nodes (even with the same, large enough $k$) keep 
eliminating each other without one becoming dominant and transitioning into the 
MIS state.

\paragraph{Algorithm:}
Nodes have three different internal states: \emph{inactive}, \emph{competing}, 
and \emph{MIS}.  Each node has an estimate $k$ on the priority-size that is 
monotone increasing during the execution of the algorithm. Initially all nodes 
are in the inactive state with $k=6$.

Nodes communicate in beep-triplets, and synchronize by starting a
triplet only when $t \equiv 0 \pmod 3$. The first bit of the triplet is
the Restart-Bit. A \beep is sent in the Restart-Bit if and only if $t
\not\equiv 0 \pmod k$, otherwise a node listens in the Restart-Bit. If a node 
hears a \beep in its Restart-Bit it
doubles its estimate for $k$ and it becomes inactive. The second
bit sent in the triplet is the MIS-Bit. A \beep is sent for the MIS-Bit
if and only if a node is in the MIS state. If a node hears a \beep on
the MIS-bit it becomes inactive. The last bit sent in the triplet is the
Competing-Bit. If inactive, a node listens in the Competing-Bit. If a 
node is competing it
sends a \beep with probability 1/2 in the Competing-Bit. If a node is in 
the MIS state and it listened in the previous Competing-Bit then it 
beeps in the current Competing-Bit. On the other hand if node in the MIS 
state beeped in the previous Competing-Bit, then it flips a coin to 
decide weather to beep or listen in the current Competing-Bit. This 
ensures a node in the MIS state beeps every 2 round.
If a node hears a \beep on
the Competing-Bit it becomes inactive, and if the node was in the MIS-state it 
also doubles its estimate for $k$.
Lastly, a node transitions from inactive to competing (or from competing 
to MIS) between any time $t$ and $t+1$ for $t\equiv 0\pmod k$. The 
pseudo code is described in more detail in Algorithm~\ref{alg:synch}.

\begin{algorithm}
  \caption{Synchronous Clocks.}
  \label{alg:synch}
  \begin{algorithmic}[1]
    \State Initially $state \gets inactive$, $next \gets random~0/1$
    \If{ $t \equiv 0 \mod 3$}
      \Comment{Restart-Bit}
      \If{$t \not\equiv 0 \mod k$}
        \beep
      \Else
        ~listen
        \If {heard beep}
          $state \gets inactive$, $k \gets 2 \cdot k$
        \Else
          ~advance $state$ \State \quad ($inactive\to competing$ or 
          $competing \to MIS$)
        \EndIf
      \EndIf
    \EndIf
    \If{$t \equiv 1 \mod 3$}
      \Comment{MIS-Bit}
      \If{$state = MIS$}
        \beep
      \Else
        ~listen
        \State {\bf if} heard beep {\bf then} $state \gets inactive$
      \EndIf
    \EndIf
    \If{$t \equiv 2 \mod 3$}
      \Comment{Competing-Bit}
      \If{$state = inactve$}
        \State listen
      \ElsIf{$state = competing$}
        \State with probability 1/2 beep, otherwise listen
        \State {\bf if} heard beep {\bf then} $state \gets inactive$
      \ElsIf{$state = MIS$}
        \If{$v = 1$}
          beep, $next \gets random~0/1$
        \Else
          ~listen, $next \gets 1$
          \State {\bf if} heard beep {\bf then} $state \gets inactive$, 
          $k \gets 2\cdot k$
        \EndIf
      \EndIf
    \EndIf
  \end{algorithmic}
\end{algorithm}

\paragraph{Analysis:}

The main result of this section is the following theorem.

\begin{theorem}
  If nodes have synchronous clocks then Algorithm 4 solves the MIS 
  problem in $\bigO(\log^2 n)$ rounds.
\end{theorem}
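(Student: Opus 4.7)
The plan is to show that once the priority-size estimate $k$ stabilizes at $\Theta(\log n)$ with all neighboring nodes agreeing on its value, each $k$-round segment between state transitions simulates one iteration of Luby's permutation algorithm, and then to invoke the standard $\bigO(\log n)$-iteration analysis.

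First I would analyze the evolution of $k$. If a node with estimate $k$ has a neighbor with larger estimate $k' > k$, then at any time $t$ divisible by $k$ but not by $k'$ the neighbor beeps on the Restart-Bit while the node listens, forcing the node to double its estimate. Since $k$ is monotone and an era of length $\Theta(k)$ rounds contains at most one Restart-Bit opportunity per node, a geometric sum shows that within $\bigO(\log n)$ rounds after the last wake-up all nodes in a connected component share a common $k$ of value at least some target $c \log n$. I would then argue that no further doublings occur with high probability once $k \geq c \log n$, using the segment analysis below to rule out both the Restart-Bit trigger and the MIS-conflict trigger.

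Next I would analyze a single $k$-round segment (delimited by successive Restart-Bit transitions) assuming all participating nodes share an estimate $k \geq c \log n$. The segment contains $\Theta(k)$ Competing-Bits; for any two competing neighbors, each Competing-Bit independently breaks their symmetry with probability $1/2$, so the probability that both survive the segment is at most $2^{-\Theta(k)} = n^{-\Omega(1)}$. A union bound over edges then gives safety: with high probability the set of nodes that transitions into the MIS state at the end of the segment is independent. The same mechanism implements a sufficiently faithful version of Luby's ``highest-rank-wins'' rule, so by the good-edges argument (Lemma~\ref{lem:goodedges}) a constant fraction of the remaining edges is removed in expectation per segment, and $\bigO(\log n)$ segments suffice to finish with high probability.

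Combining the two pieces, after $k$ stabilizes, $\bigO(\log n)$ segments of $\bigO(k) = \bigO(\log n)$ rounds each yield $\bigO(\log^2 n)$ rounds in total; the $\bigO(\log n)$ transient for $k$ to stabilize is absorbed into this bound. The main obstacle is the transient in which neighbors temporarily disagree on $k$ or two neighboring MIS nodes briefly coexist: the former is handled by the Restart-Bit synchronization described above, and the latter by the alternating beep/listen schedule of the MIS state on the Competing-Bit, which guarantees that two neighboring MIS nodes eventually satisfy the ``one listens while the other beeps'' condition, causing both to revert to inactive and double $k$, so that the transient cannot last longer than the time required for $k$ to reach $\Theta(\log n)$.
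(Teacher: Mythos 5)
There is a genuine gap, and it sits at the heart of your plan: the claim that ``within $\bigO(\log n)$ rounds after the last wake-up all nodes in a connected component share a common $k$ of value at least $c\log n$.'' Nothing in Algorithm~\ref{alg:synch} forces $k$ to grow. The estimate doubles only when a node hears a beep on its Restart-Bit (a neighbor with strictly larger $k$) or when a node in the MIS state loses a Competing-Bit to a neighbor; in an execution in which no two neighbors ever collide, every node keeps $k=6$ forever, and the algorithm is in fact supposed to succeed in that regime. Consequently your safety and progress estimates, which rely on $2^{-\Theta(k)}=n^{-\Omega(1)}$ and hence on $k\geq c\log n$, simply do not apply for most of the execution, and the ``transient'' you propose to absorb has no $\bigO(\log n)$ bound -- indeed it need not end at all. (Separately, demanding a common $k$ across a whole connected component would cost time proportional to the diameter, since larger $k$ values propagate only between neighbors; some locality restriction is unavoidable.) A smaller inaccuracy: when two neighboring MIS nodes first break symmetry on a Competing-Bit, only the listening node reverts to inactive and doubles $k$; the beeping node, lacking sender-side collision detection, stays in the MIS state.

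The paper's proof is organized precisely to avoid assuming $k$ is large. Lemma~\ref{lem:smallk} gives only an upper bound, $k\in\bigO(\log n)$ w.h.p. (a node in the MIS state must survive $k/6$ Competing-Bits against a competitor, probability $2^{-k/6}$, plus a union bound). Lemma~\ref{thm:synchalg} then argues per value of $k$: within $\bigO(k\log n)$ rounds either $k_u$ increases, or all nodes in an $\bigO(\log n)$-neighborhood of $u$ share the value $k$ for $\bigO(k\log n)$ rounds; conditioned on no collision (any collision raises $k$ within $\bigO(\log n)$ rounds w.h.p.), the algorithm behaves exactly as Luby's permutation algorithm and stabilizes $u$ within $\bigO(\log n)$ priority exchanges. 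Summing over the $\bigO(\log\log n)$ possible doublings gives $\sum_i 2^i\cdot 3\cdot\bigO(\log n)\in\bigO(\log^2 n)$. Your proposal is missing exactly this per-$k$ accounting (progress or doubling for \emph{every} current value of $k$, small ones included) together with the locality argument; without them the claimed $\bigO(\log^2 n)$ bound does not follow.
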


First, we show that with high probability $k$ cannot become 
super-logarithmic.

\begin{lemma}\label{lem:smallk}
With high probability $k\in \bigO(\log n)$ for all nodes during the 
execution of the algorithm.
\end{lemma}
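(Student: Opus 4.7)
The plan is to show that the network-wide maximum of $k$ cannot grow past $c\log n$ (for a suitable constant $c$) except with polynomially small probability. I would isolate the two mechanisms by which any node's $k$ can double in Algorithm~\ref{alg:synch}: (i) hearing a beep on the Restart-Bit at a time $t \equiv 0 \pmod k$, or (ii) being in the MIS state and hearing a beep on the Competing-Bit.

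First, I would establish the deterministic fact that Restart-Bit doublings never raise the global maximum of $k$. Every $k$ value lies in the set $\{6\cdot 2^j : j \ge 0\}$, so if a node $A$ listens on a Restart-Bit ($t\equiv 0\pmod{k_A}$) and hears a beep from neighbor $B$, then $t\not\equiv 0\pmod{k_B}$, forcing $k_A \le k_B/2$; after the doubling, $2k_A \le k_B$, which does not exceed the previous maximum. A useful corollary is that two neighbors cannot both reach a state-advance moment $t\equiv 0\pmod k$ without already sharing the same $k$ value, since otherwise the one with the smaller estimate would have been doubled (and reset to inactive) at an earlier such moment. This also handles adversarial wake-ups: a freshly woken node, whose estimate starts at $6$, must be doubled up to match its neighbors via the Restart-Bit before it can transition into competing and beep on any Competing-Bit.

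Second, I would estimate the probability of a Competing-Bit doubling. The only way an MIS-state node can hear a Competing-Bit beep is from a competing or MIS neighbor; but any competing neighbor present before the node became MIS would have been silenced during the preceding MIS-Bit. Thus a Competing-Bit doubling that raises the global maximum requires two neighbors $u,v$ to transition from competing to MIS at the end of the same period of length $k$ (and, by the corollary above, to share the same $k$). This requires that neither heard the other's beep during any of the $\Theta(k)$ Competing-Bits in that period. Since each of $u,v$ beeps independently with probability $1/2$ in every Competing-Bit, the probability that neither hears the other equals $1/2$ per bit, giving a survival probability of at most $(1/2)^{\Theta(k)}$ per pair per period.

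A union bound now finishes the argument. Over the $\mathrm{poly}(n)$ rounds of interest and the $\binom{n}{2}$ candidate pairs, the probability of any Competing-Bit doubling occurring while the maximum $k$ is at least $c\log n$ is at most $n^2\cdot\mathrm{poly}(n)\cdot 2^{-\Theta(c\log n)} \le n^{-c'}$ for any desired $c'$ and $c$ large enough. Combined with the fact that Restart-Bit doublings never raise the maximum, this shows $k \in \bigO(\log n)$ throughout with high probability. The main obstacle I expect is the careful bookkeeping behind the ``two simultaneous MIS transitions'' argument, in particular rigorously ruling out the other interactions between an MIS node and its neighbors (adversarial wake-ups, races between the MIS-Bit and Competing-Bit within one triplet, and cascades of Restart-Bit doublings propagating through a component) as sources of a Competing-Bit beep heard by an MIS node.
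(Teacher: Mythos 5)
Your proposal is correct and takes essentially the same route as the paper's proof: both reduce the claim to (i) Restart-Bit doublings being triggered only by a neighbor with strictly larger $k$ (hence never raising the maximum) and (ii) a Competing-Bit doubling requiring two neighbors to have entered the MIS state simultaneously with the same $k$, which forces them to survive $\Theta(k)$ Competing-Bits without either hearing the other, an event of probability $2^{-\Theta(k)}$, finished by a union bound over nodes/pairs and the polynomially many relevant rounds. Your explicit divisibility argument ($k_A \le k_B/2$ since all estimates have the form $6\cdot 2^j$) is just a sharper rendering of the paper's observation that Restart-Bit doublings come only from neighbors with larger estimates.
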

\begin{proof}
We start by showing that two neighboring nodes in the MIS state must have the 
same estimate $k$ \emph{and} must have transitioned to the MIS state at the 
same time. We prove both parts of this statement by contradiction.

First, suppose by contradiction that two neighboring nodes $u$ and $v$ 
are in the MIS state but $u$ transitioned to this state (the last time) 
before $v$.  In this case $v$ would have received the MIS-bit from $u$ 
and become inactive instead of joining the MIS --  a contradiction.

Similarly, for sake of contradiction, now assume that the neighboring nodes $u$ 
and $v$ are in the MIS state and $k_u < k_v$. In
this case, during the active phase of $u$ before it transitioned to the
MIS at time $t$ it would have hear a beep in its Restart-Bit (produced 
by $v$) and would have switched to the inactive state, which contradicts 
that $u$ is in the MIS state.

We now use this to show that for a specific node $u$ it is unlikely to
become the first node with a too large $k$. For this we note that
$k_u$ is doubled because of a Restart-Bit only if a \beep from a
node with a larger $k$ is received. This node can therefore not be
responsible for $u$ becoming the first node getting a too large
$k$. The second way $k$ can increase is if a node transitions out of
the MIS state because it receives a Competing-Bit from a neighbor
$v$. In this case, we know that $u$ competed against at least one such
neighbor for $k/6$ phases without loosing in any of these phases. The 
probability that
this happens is $2^{-k/6}$. Hence, if $k \in \Theta(\log n)$, then with 
high probability it does not happen.  A union bound over all nodes and 
the polynomial number of rounds in which nodes are not yet stable 
finishes the proof.
\end{proof}

\begin{lemma}\label{thm:synchalg}
    If during an execution the $\bigO(\log n)$ neighborhood of node $u$
    has not changed for $\Omega(\log^2 n)$ rounds then node $u$ is 
    stable with high probability,
    i.e., $u$ is either in the MIS state with all its neighbors being
    inactive or it has at least one neighbor in the MIS state whose
    neighbors are all inactive.
\end{lemma}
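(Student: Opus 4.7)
The strategy is to reduce an execution of Algorithm~\ref{alg:synch} in the stabilized $\bigO(\log n)$-neighborhood of $u$ to an execution of Luby's permutation algorithm on the same subgraph. By Lemma~\ref{lem:smallk}, we may assume throughout that every node's estimate $k$ is bounded by some common value $K \in \Theta(\log n)$.

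First I would argue synchronization: within $\bigO(K)$ rounds after the neighborhood freezes, every node in a small ball around $u$ shares the same $k$-estimate, namely the maximum value in that ball. Indeed, any neighbor with a smaller $k$ hears, within one of its own $k$-phases, a restart-beep on some round $t\not\equiv 0\pmod{k_{\text{self}}}$ that is $\equiv 0\pmod{k_{\text{self}}}$ for itself, and therefore doubles its estimate. Iterating this, a synchronization wave sweeps the frozen $\bigO(\log n)$-neighborhood in $\bigO(\log n)$ macro-rounds.

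Next I would argue that, once $k$-values are aligned at $K$, each window of $K$ consecutive rounds constitutes a macro-round that faithfully emulates one round of Luby's permutation algorithm with priorities of bit length $\Theta(K)$. The key calculation is that two competing neighbors fail to separate in a single Competing-Bit exchange with probability exactly $1/2$, so after $\Theta(K)$ Competing-Bit exchanges the chance that both remain competing is $2^{-\Theta(K)}=n^{-\Omega(1)}$. A union bound over the polynomially many competing pairs and macro-rounds inside our window shows that, with high probability, no two neighbors simultaneously enter the MIS-state and the $k$-doubling rule on the Competing-Bit is never triggered during the interval of interest.

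With this clean emulation in hand, I would invoke the local-convergence guarantee for Luby's algorithm: after $\bigO(\log n)$ macro-rounds, every node whose $\bigO(\log n)$-hop neighborhood is available stabilizes with high probability. Because the state of a node after $t$ macro-rounds depends only on its $t$-hop neighborhood, the hypothesis that $u$'s $\bigO(\log n)$-neighborhood is frozen for $\Omega(\log^2 n)$ rounds provides exactly the locality required. Multiplying $\bigO(\log n)$ macro-rounds by $K=\bigO(\log n)$ rounds each yields the claimed $\bigO(\log^2 n)$ bound. The main obstacle will be adapting the global good-edge argument underlying Lemma~\ref{lem:goodedges} into a local statement about a ball around $u$, so that the Luby-style progress can be bounded using only the frozen portion of the graph; a secondary subtlety is verifying that influence from adversarial activity outside this ball cannot propagate to $u$ within $\bigO(\log n)$ macro-rounds, which is precisely what forces the assumption of a logarithmic-radius frozen neighborhood rather than just the immediate neighbors.
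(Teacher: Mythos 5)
Your overall plan (use Lemma~\ref{lem:smallk}, synchronize $k$ locally, then argue the frozen $\bigO(\log n)$-ball emulates Luby's permutation algorithm and invoke locality) follows the same route as the paper, but it has a genuine gap in the step ``once $k$-values are aligned at $K$'': Lemma~\ref{lem:smallk} only gives an \emph{upper} bound $K\in\bigO(\log n)$ on the estimates, whereas the value the ball actually aligns to is the local maximum of $k$, which can be as small as the initial value $6$. Your high-probability no-collision claim is exactly where this matters: two competing neighbors survive a phase of $k$ rounds together with probability $2^{-\Theta(k)}$, which is polynomially small only when $k\in\Theta(\log n)$. When the aligned local value is small, neighboring nodes do enter the MIS state simultaneously with constant probability, the Competing-Bit rule doubles $k$, and your single ``clean emulation window'' breaks down. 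The paper closes this by replacing the one-shot alignment with a dichotomy: for the current value of $k$, within $\bigO(k\log n)$ rounds either $k_u$ increases, or the whole $\bigO(\log n)$-neighborhood shares $k_u$ and runs a collision-free synchronized Luby for $\bigO(\log n)$ priority exchanges; the total cost is then the geometric sum $\sum_i^{\bigO(\log\log n)} 2^i\cdot 3\cdot\bigO(\log n)\in\bigO(\log^2 n)$ over the successive doublings, with Lemma~\ref{lem:smallk} used only to cap the top of the sum. Without some accounting of this kind, your argument proves stabilization only under the unjustified assumption that the local estimates are already logarithmic when the window starts.

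Two smaller points. First, your synchronization claim is too fast as stated: the maximum $k$ propagates at a rate of $\bigO(k)$ rounds per hop, so sweeping an $\bigO(\log n)$-radius ball costs $\bigO(k\log n)$ rounds, and the target of the sweep is itself a moving quantity because collisions inside the ball keep raising the maximum --- again the reason the paper phrases this as ``either $k$ increases or the ball is uniform for $\Omega(k\log n)$ rounds'' rather than as a one-time alignment. Second, the obstacle you flag at the end --- localizing the good-edge argument of Lemma~\ref{lem:goodedges} --- is not the relevant tool here: that lemma belongs to the analysis of Algorithm~\ref{alg:mis1}, while the present algorithm simulates Luby's \emph{permutation} algorithm, for which the paper simply invokes its $\bigO(\log n)$-round guarantee together with a standard locality argument (the state of $u$ after $t$ exchanges depends only on its $t$-hop neighborhood), which is the part of your sketch that is essentially right.
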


\begin{proof}
First observe that if the whole graph has the same value of $k$ and no
two neighboring nodes transition to the MIS state at the same time, then
our algorithm behaves exactly as Luby's original permutation algorithm,
and therefore terminates after $\bigO(k \log n)$ rounds with high
probability.
From a standard locality argument, it follows that a node $u$ also
becomes stable if the above assumptions only hold for a $\bigO(k \log 
n)$
neighborhood around $u$. Moreover, since Luby's algorithm performs
only $\bigO(\log n)$ rounds in the message passing model, we can improve 
our
locality argument to show that in if a $\bigO(\log n)$ neighborhood 
around
$u$ is well-behaved, then $u$ behaves as in Luby's algorithm.

Since the values for $k$ are monotone increasing and propagate between
two neighboring nodes $u$ and $v$ with different $k$ (i.e., $k_u > k_v$) 
in at
most $2k_u$ steps, it follows that for a node $u$
it takes at most $\bigO(k_u \log n)$ rounds until either $k_u$
increases or all nodes $v$ in the $\bigO(\log n)$ neighborhood of $u$ 
have
$k_v = k_u=k$ for at least $\bigO(k \log n)$ rounds. We can furthermore 
assume
that these $\bigO(k \log n)$ rounds are collision free (i.e, no two
neighboring nodes go into the MIS), since any collision leads with high
probability within $\bigO(\log n)$ rounds to an increased $k$ value for 
one
of the nodes. 

For any value of $k$, within $\bigO(k \log n)$ rounds a node thus either
performs Luby's algorithm for $\bigO(\log n)$ priority exchanges, or it
increases its $k$. Since $k$ increases in powers of two and, according
to Lemma~\ref{lem:smallk}, with high probability it does not
exceed $\bigO(\log n)$, after at most $\sum_i^{\bigO(\log \log n)} 
2^i\cdot
3\cdot \bigO(\log n) \in \bigO(\log^2 n)$ rounds the status labeling 
around a
$\bigO(\log n)$ neighborhood of $u$ is a proper MIS. This means that $u$ 
is
stable at some point, and the MIS-bit guarantees that no competing 
neighbor of $u$ will join the MIS and therefore stability is preserved 
for the rest of the execution.
\end{proof}

We remark that as the algorithm of Section~\ref{sec:alg1}, this algorithm is also robust enough to work as-is with an adversary capable of crashing nodes (with the same caveats on the guarantees mentioned in Section~\ref{sec:alg1}).


  \begin{acknowledgements}
We thank the anonymous reviewers for their feedback to improve
the quality of this paper.
Research supported in part by AFOSR Award FA9550-08-1-0159,
NSF Award CNS-1035199,
NSF Award CCF-0937274,
NSF Award CCF-0726514,
ERC advanced grant, USA-Israeli BSF grant, and the Israeli I-Core 
program.
  \end{acknowledgements}

\end{document}